\newcommand{\each}{\emph{\bf each} }
\newcommand{\be}{\begin{equation}}
\newcommand{\ee}{\end{equation}}
\newcommand{\ZZ}{{\mathbb Z}}
\newcommand{\NN}{{\mathbb N}}
\newcommand{\PP}{{\mathbb P}}
\newcommand{\Gaussian}{\mathcal{N}}
\newcommand{\dist}{\mathrm{dist}}
\newcommand{\norm}[1]{\left\lVert #1\right\rVert}                               %
\newcommand{\br}[1]{\left\{#1\right\}}
\newcommand{\abs}[1]        {\left| #1\right|}
\newcommand{\REAL}{\ensuremath{\mathbb{R}}}                       % Norm
\newcommand{\RR}{{\REAL}}
\newcommand\sett[2]{\left\{ \left. #1 \;\right\vert #2 \right\}}
\newcommand\details[1]{}
\newcommand{\assign}{:=}
\newcommand{\floor}[1]{\lfloor #1\rfloor}
\newcommand{\ceil}[1]{\lceil #1\rceil}
\newcommand{\round}[1]{\lceil #1 \rfloor}
\newcommand{\encrypted}[1]{\llbracket #1\rrbracket}
\newcommand{\enc}[1]{\llbracket #1\rrbracket}
\newcommand{\Sim}{\mathcal{S}}
\newcommand{\A}{\mathcal{A}}
\newcommand{\E}{\mathcal{E}}
\newcommand{\Gen}{\operatorname{Gen}}
\newcommand{\Enc}{\operatorname{Enc}}
\newcommand{\view}{\textsf{view}}
\newcommand{\SD}{\mathrm{SD}}
\newcommand{\DD}{\mathcal{D}}
\newcommand{\T}{\mathcal{T}}
\newcommand{\isSmaller}{\mathrm{isSmaller}}
\newcommand{\isNeg}{\mathrm{isNegative}}
\newcommand{\size}{\mathrm{size}}
\newcommand{\depth}{\mathrm{depth}}
\newcommand{\overhead}{{overhead}}
\newcommand{\indegree}{\mathrm{indegree}}
\newcommand{\val}{\mathrm{val}}
\newcommand{\class}{\mathrm{classes}}
\newcommand{\CoinToss}{\mathrm{CoinToss}}
\newcommand{\KNearestNeighbors}{\mathrm{KNearestNeighbors}}
\newcommand{\ProbAvg}{\mathrm{ProbabilisticAverage}}
\newcommand{\computeDist}{\mathrm{computeDist}}
\newcommand{\high}{\mathrm{high}}
\newcommand{\low}{\mathrm{low}}
\newenvironment{protocol}[1][htb]
  {% Update algorithm name
   \begin{algorithm}[#1]%
  }{\end{algorithm}}
\begin{document}

\title{Secure $k$-ish Nearest Neighbors Classifier}

\author{
Hayim Shaul\inst{1} \and
Dan Feldman\inst{2} \and
Daniela Rus\inst{1}
}

\institute{CSAIL MIT, Cambridge, MA, USA.
\email{\{hayim,rus\}@csail.mit.edu}\and
University of Haifa, Haifa, Israel.
\email{\{dannyf\}@csail.mit.edu}
}

%\author{
%Hayim Shaul\\
%       {CSAIL MIT}\\
%       {Cambridge, MA}\\
%       \texttt{hayim@csail.mit.edu}
%\AND
%% 2nd. author
%Dan Feldman\\
%       {University of Haifa}\\
%       {Haifa, Israel}\\
%       \texttt{dannyf@csail.mit.edu}
%\AND
%% 3rd. author
%Daniela Rus\\
%       {CSAIL, MIT}\\
%       {Cambridge, MA}\\
%       \texttt{rus@csail.mit.edu}
%}

\maketitle

\begin{abstract}
In machine learning, classifiers are used to predict a class of a given query based on an existing (already classified) database.
Given a database $S$ of $n$ $d$-dimensional points and a $d$-dimensional query $q$,
the $k$-nearest neighbors ($k$NN) classifier assigns $q$ with the majority class of its $k$ nearest neighbors in $S$.

In the secure version of $k$NN, $S$ and $q$ are owned by two different parties that do not want to share their (sensitive) data.
A secure classifier has many applications. For example, diagnosing a tumor as malignant or benign.
Unfortunately, all known solutions for secure $k$NN either require a large communication complexity between the parties,
or are very inefficient to run (estimated running time of weeks).

In this work we present a classifier based on $k$NN, that can be implemented efficiently with homomorphic encryption (HE).
The efficiency of our classifier comes from a relaxation we make on $k$NN, where we allow it to consider $\kappa$ nearest neighbors for
$\kappa\approx k$ with some probability. We therefore call our classifier $k$-ish Nearest Neighbors ($k$-ish NN).

The success probability of our solution depends on the distribution of the distances from $q$ to $S$ and increase as its
statistical distance to Gaussian decrease.

To efficiently implement our classifier we introduce the concept of {\em doubly-blinded coin-toss}. In a doubly-blinded coin-toss
the success probability as well as the output of the toss are encrypted.
We use this coin-toss to efficiently approximate the average and variance of the distances from $q$ to $S$.
We believe these two techniques may be of independent interest.

When implemented with HE, the $k$-ish NN has a circuit depth (or polynomial degree) that is independent of $n$, therefore making it scalable.
We also implemented our classifier in an open source library based on HELib~\cite{HELib} and tested it on a breast tumor database.
The accuracy of our classifier (measured as $F_1$ score) were $98\%$ and classification took less than 3 hours compared to (estimated) weeks in current HE implementations.

\end{abstract}

\section{Introduction}
\label{sec_intro}

% what is the problem
% +
% why is it important/interesting
A key task in machine learning is to classify an object based on a database of previously classified objects.
For example, with a database of tumors, each of them  classified as malignant or benign,
we wish to classify a new tumor.
Classification algorithms have been long studied. For example, $k$ nearest neighbor ($k$NN) classifier~\cite{knn1967}, where a classification of a new tumor is
done by considering the $k$ nearest neighbors (i.e. the most similar tumors, for some notion of similarity).
The decision is then taken to be the majority class of those neighbors.

In some cases, we wish to perform the classification without sharing the database or the query.
In our example, the database may be owned by a hospital while the query is done by a clinic.
Here, sharing the database is prohibited by regulations (e.g. HIPPA~\cite{hipaa})
and sharing the query may expose the hospital and the clinic to liabilities and regulations (e.g. HIPAA~\cite{hipaa} and GDPR~\cite{gdpr}).

In secure multi-party computation (MPC), several parties compute a function without sharing their input.
Solutions such as \cite{mpc-triplets,yao} have the disadvantage of having a large communication complexity. Specifically it is proportional to the running time of
the computation.
Recently, secure-MPC have been proposed based on homomorphic encryption (HE) (see~\cite{BGV12}) that makes it possible to compute a polynomial over encrypted messages (ciphertexts).
Using HE, the communication complexity becomes proportional to the size of the input and output.
In our example, the clinic encrypts its query with HE and sends the encrypted query to the hospital.
The polynomial the hospital applies is evaluated to output a ciphertext that can be decrypted, (only) by the clinic, to get the classification of the query.
See Figure~\ref{fig_system}.

\begin{figure*}[htbp]
\begin{center}
\input{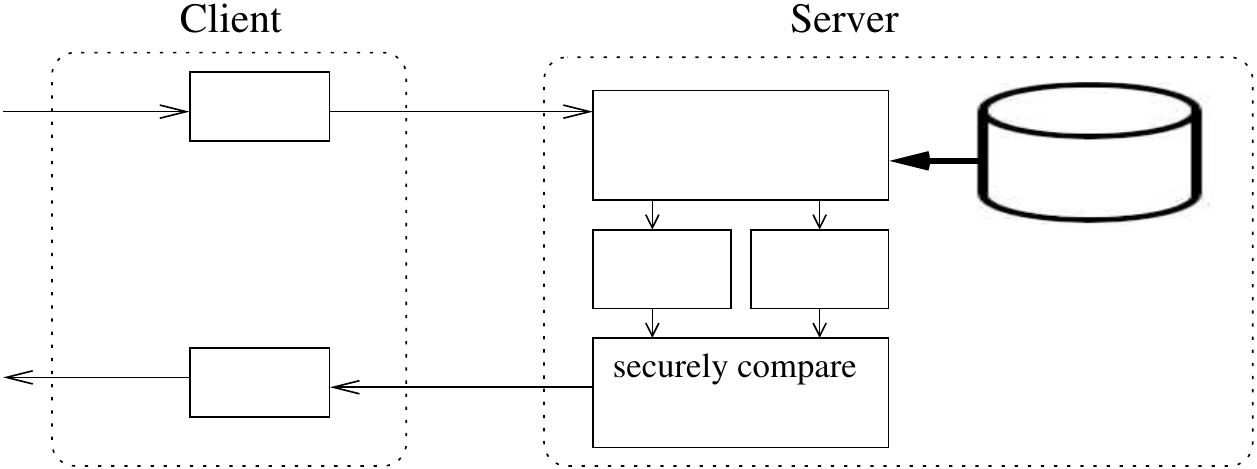_t}
\end{center}
\caption{A HE-based protocol for Secure $k$-nearest neighbors classifier.
(i) A client has a pair $(sk,pk)$ and a query $q$. The client encrypts the query $\encrypted{q} = Enc_{pk}(q)$ and sends $\encrypted{q}$ and $pk$ to the server.
(ii) The Server securely finds the $k$ neighbors of $\encrypted{q}$ from $s_1,\ldots,s_n$.
(iii) The Server securely counts $\encrypted{C_0}$ and $\encrypted{C_1}$, the number of neighbors having class 0 and 1, respectively. Since these are counted
with HE the result, $\encrypted{C_0}$ and $\encrypted{C_1}$, are also encrypted.
(iv) The server computes a polynomial over $\encrypted{C_0}$ and $\encrypted{C_1}$ that determines $\encrypted{class_q}$, the class of $q$.
(iv) The server sends the ciphertext $\encrypted{class_q}$ to the client.
(v) The client decrypts $class_q = Dec_{sk}(\encrypted{class_q})$.
}
\label{fig_system}
\end{figure*}

The downside of using HE is the efficiency of evaluating polynomials.
Although generic recipes exists that formulate any algorithm as a polynomial of its input, in practice the polynomials generated by these recipes
have poor performance.
The main reason for the poor performance is the lack of any comparison operators. Since comparisons  leak information that can be used to break the encryption,
under HE we can only have a ``comparison'' polynomial whose output is encrypted and equals 1 if the comparison holds and 0 otherwise.
The second reason is an artifact of homomorphic encryption schemes: the overhead of evaluating a single operation grows with the degree of the evaluated polynomial.
For many ``interesting'' problems it is a challenge to construct a polynomial that can be efficiently evaluated  with HE.

In this paper we consider the secure classification problem.
We propose a new classifier which we call $k$-ish nearest neighbors. In this new classifier the server considers some $\kappa \approx k$ nearest neighbors to the query.
Relaxing the number of neighbors significantly improves the time performance of our classifier while having a small impact on the accuracy performance.
Specifically, the time to compute our classifier on real breast cancer database dropped from weeks (estimated) to less than 3 hours, while the accuracy (measured by $F_1$ score) decreased from $99\%$ to $98\%$.
See details in Section~\ref{sec_experiment}.

The solution we introduce in this paper assumes
the distances of the database to the query are statistically close to Gaussian distribution.
Although sounding too limiting, we argue (and show empirically) otherwise.
We show that many times the distribution of distances is statistically close enough to Gaussian.
In future work, we intend to remove this assumption.

The efficiency of our solution comes from two
new non-deterministic primitives that we introduce in this paper:
\begin{itemize}
\item a new approach to efficiently compute an approximation to $1/m \sum_{i=1}^n f(\encrypted{x_i})$, where $n,m$ are integers, $f$ is an increasing invertible function and $\encrypted{x_1}, \encrypted{x_2}, \ldots$ are ciphertexts.
\item a {\em doubly-blinded} coin-tossing algorithm, where the result {\bf and} the probability of the toss are encrypted.
\end{itemize}
We believe these two primitives are of independent interest and can be used in other algorithm as well

We built a system written in C++ and using HElib~\cite{HELib} to securely classify breast tumor as benign or malignant using $k$-ish NN classifier.
Our classifier used the Wisconsin Diagnostic Breast Cancer Data Set~\cite{cancerData},
classified a query in less than 3 hours with $98\%$ accuracy.
This significantly improves over previous running times and makes secure classifications with HE a solution that is practical enough to be implemented.

\section{Related Work}
%There are general results showing how to perform secure computation between parties (e.g.~\cite{Goldreich2004},~\cite{GMW87},~\cite{yao}).
Previous work on secure $k$NN either had infeasible running time or had a large communication complexity.
For example,
in~\cite{Wong_sknn_2009}, Wong et al. considered a distance recoverable encryption
to have the server encrypt $S$. The user encrypts $q$ and a management system computes and compares the distances.
However, this scheme leaks information to an attacker knowing some of the points in $S$~\cite{Xiao_sknn_2013}, in addition some data leaks to the management system.
In~\cite{cheng_sknn_15}, Cheng et al. proposed a similar faster scheme where their  speedup was achieved by using a secure index.
Their scheme relies upon a cloud server that does not collude with the data owner, a model which is not always possible or trusted.
In~\cite{Yiu_sknn_2010}, Yiu et at. considered the two dimensional case where users want to hide their locations.
They use a cryptographic transformation scheme to let the user perform a search on an R-tree, which requires as many protocol rounds as the height of the tree.
In addition, most of their computation is done in by the user which might not always be feasible.
In~\cite{Hu_sknn_2011}, Hu et al. propose a scheme to traverse an R-tree where a homomorphic encryption scheme is used to compute distances and choose the next node in the traversing
of the R-tree.
However, this scheme is vulnerable if the attacker knows some of the points in $S$~\cite{Xiao_sknn_2013}.
In addition, the communication complexity is proportional to the height of the R-tree.
In~\cite{Zhu_sknn_2013}, Zhu et al. considered a security model in which users are untrusted. In their scheme the Server communicates with the cloud to compute the set of 
$k$ nearest neighbors to the query $q$, however, this leaks some information on $q$.
In~\cite{elmehdwi_sknn_14}, Elmehdwi et al. proposed a scheme that is, to the best of our knowledge, the first to guarantee privacy of data as well as query.
However, this scheme depends on the existence of a non colluding cloud. This is an assumption not all users are willing to make. In addition, the communication overhead of
this scheme is very high (proportional to the size of the database).

\section{Preliminaries}
For an integer $m$ we denote $[m] = \br{1,\ldots, m}$.
We use $\encrypted{msg}$ to denote a ciphertext that decrypts to the value $msg$.

The field $\ZZ_p$, where $p$ is prime, is the set $\br{0,\ldots, p-1}$ equipped with $+$ and $\cdot$ done modulo $p$.

We denote by $\round{x}$, where $x \in \RR$, the rounding of $x$ to the nearest integer.

A {\em database} of $\ZZ_p^d$ points of size $n$ is the tuple $S=(s_1, \ldots, s_n)$, where $s_1, \ldots, s_n \in \ZZ_p^d$.
We denote by $class(s_i) \in \br{0,1}$ the class of $s_i$.

Let $S=(s_1, \ldots, s_n)$ be a database of $\ZZ_p^d$ points of size $n$ and let $q\in\ZZ_p^d$.
The {\em distance distribution} is the distribution of the random variable $x = dist(s_i,q)$, where $i \leftarrow [n]$ is drawn uniformly.
We denote the distance distribution by $\DD_{S,q}$.

The statistical distance between two discrete probability distribution $X$ and $Y$ over a finite set $\br{0,\ldots, p-1}$, denoted $\SD(X, Y)$, is defined
as
$$ SD(X,Y) = \max_{u\in\br{0,\ldots,p-1}} \abs{ Pr[x=u] - Pr[y=u] },$$ where $x \sim X$ and $y \sim Y.$

The cumulative distribution function (CDF) of a distribution $X$ is defined as $CDF_X(\alpha) = Pr[ x < \alpha \mid x \sim X]$.

The {\em $F_1$ Score} (also called {\em Dice coefficient} or {\em Sorensen} coefficient) is a measure of similarity of two sets. It is given by
$F_1 = 2\frac{|X \cap Y|}{|X| + |Y|}$, where $X$ and $Y$ are sets. In the context of classifier, the $F_1$ score is used to measure the accuracy of
a classifier by taking $X$ to be the set of samples classified as 1 and $Y$ be the set of samples whose class is 1.

\subsection{Polynomial Interpolation}
For a prime $p$ and a function $f : [0,p] \mapsto [0,p]$, we define the polynomial $\PP_{f,p} : \ZZ_p \mapsto \ZZ_p$, where $\PP_{f,p}(x) = \round{f(x)}$ for
all $x \in \ZZ_p$. When $p$ is known from the context we simple write $\PP_f$.

An explicit description of $\PP_{f,p}$ can be given by the interpolation
$$\PP_{f,p} (x) = \sum_{i=0}^{p-1} \frac{(x-0)\cdots\big(x-(i-1)\big)\cdot\big(x-(i+1)\big)\cdots\big(x-(p-1)\big)}{(i-0)\cdots\big(i-(i-1)\big)\cdot\big(i-(i+1)\big)\cdots\big(i-(p-1)\big)}\round{ f(i) }.$$
Rearranging the above we can write $$\PP_{f,p}(x) = \sum_{i=0}^{p-1} \alpha_i x^i$$ for appropriate coefficients $\alpha_0,\ldots,\alpha_{p-1}$ that depend on $f$.
In this paper we use several polynomial interpolations:

\begin{itemize}
\item $\PP_{\sqrt{\cdot}}(x) = \round{\sqrt{x}}$.
\item $\PP_{(\cdot)^2/p}(x) = x^2/p$.
\item $\PP_{(\cdot=0)}(x) = 1$ if $x=0$ and 0 otherwise.
\item $\PP_{(\sqrt{(\cdot)+p}}(x) = \sqrt{x + p}$.
\item $\PP_{(\sqrt{(\cdot)p}}(x) = \sqrt{xp}$.
\item $\PP_{\isNeg(\cdot)}(x) = 1$ if $x > p/2$ and 0 otherwise.
\end{itemize}

\paragraph{Comparing Two Ciphertexts.}
We also define a two variate function $\isSmaller : \ZZ_{p}\times\ZZ_{p} \mapsto \br{0,1}$, where $\isSmaller(x,y) = 1$ iff $x<y$.
In this paper we implement this two-variate function with a uni-variate polynomial $\isNeg : \ZZ_{p'} \mapsto \br{0,1}$,
where $p'>2p$ and $\isNeg(z) = 1$ iff $z > p'/2$.
The connection between these two polynomials is given by $$\isSmaller_p(x,y) = \isNeg_{p'}(x-y).$$

\paragraph{Computing Distances.}
\label{sec_compute_dist}
Looking ahead,
our protocol can work with any implementation of a distance function. In this paper we analyzed and tested our protocol with the $\ell_1$ distance.
We implemented a polynomial $\dist_{\ell_1}(a,b)$ that evaluates to $\norm{a-b}_{\ell_1}$ where $a=(a_1, \ldots, a_d)$ and $b= (b_1, \ldots, b_d)$:
$$\dist_{\ell_1}(a,b) = \sum_1^d \big(1-2\isSmaller(a_i, b_i)\big)(a_i - b_i).$$

We observe that
\[
\big(1-2\cdot\isSmaller(a_i, b_i)\big) =
\begin{cases}
1 & \text{if } a_i<b_i\\
-1 & \text{otherwise}.
\end{cases}
\]
and therefore $\sum_i \big(1-2\cdot\isSmaller(a_i,b_i)\big)(a_i - b_i) = \sum_i \abs{a_i - b_i} = \dist_{\ell_1}(a,b)$.

%\paragraph{Implementing $\isSmaller$}\label{alg_isSmaller}
%While the recipe we presented does extend efficiently to bivariate functions, the special function comparing two values $x$ and $y$
%can be implemented as a univariate function if we increase the ring on which the polynomial operates in. We show here how the $\isSmaller$
%operator can be implemented efficiently.
%
%We define the polynomial $\PP_{isNeg} : \br{0,2p-1} \mapsto \br{0,1}$ as:
%$$\PP_{isNeg} = \sum_{i=0}^{p} \frac{(x-0)\cdots(x-(i-1))\cdot(x-(i+1))\cdots(x-(2p-1))}{(i-0)\cdots(i-(i-1))\cdot(i-(i+1))\cdots(i-(2p-1))}$$.
%
%\begin{lemma}
%Let $\PP_{isNeg}$ be defined as above and let $x,y \in \br{0,\ldots,p}$, then then
%\[
%\PP_{isNeg} (x-y) =
%\begin{cases}
%1 & \text{if } x<y\\
%0 & \text{otherwise}.
%\end{cases}
%\]
%
%and $\PP_{isNeg}$ can be realized as a circuit of depth is $O(\log p)$ and size is $O(\sqrt{p})$.
%\end{lemma}

\paragraph{Arithmetic Circuit vs. Polynomials.}
An arithmetic circuit (AC) is a directed graph $G=(V,E)$ where for each node $v\in V$ we have $\indegree(v) \in \br{0,2}$,
where $\indegree(v)$ is the number of incoming edges of $v$.
We also associate with each $v\in V$ a value, $\val(v)$ in the following manner:

If $\indegree(v) = 0$ then we call $v$ an {\em input node} and
we associate it with a constant or with an input variable and $\val(v)$ is set to that constant or variable.

If $\indegree(v) = 2$ then we call $v$ a {\em gate} and
associate $v$ with an $Add$ operation (an {\em $add$-gate}) or $Mul$ operation (a {\em $mult$-gate}).

Denote by $v_1$ and $v_2$ the nodes connected to $v$ through the incoming edges and
set $\val(v) \assign \val(v_1) + \val(v_2)$  if $v$ is an $add$-gate or $\val(v) \assign \val(v_1) \cdot \val(v_2)$ if $v$ is a $mult$-gate.

Arithmetic circuits and polynomials are closely related but do not have a one-to-one correspondence. A polynomial can be realized by different circuits.
For example $\PP(x)=(x+1)^2$ can be realized as $(x+1)\cdot(x+1)$ or by $x\cdot x + 2\cdot x + 1$.
The first version has one $mult$-gate and one $add$-gate (here we take advantage that $x+1$ needs to be calculated once), while the latter has 2 $mult$-gates and 2 $add$-gates.

Looking ahead, we are going to evaluate arithmetic circuits whose gates are associated with HE operations (see below).
We are therefore interested in bounding two parameters of an arithmetic circuits, $C$:
\begin{itemize}
\item $\size(C)$ is the number of $mult$ gates in $C$.
This relates to the number of gates needed to be evaluated hence directly affecting the running time.
(We discuss below why we consider only $mult$ gates.
$Mul$ gates).
\item $\depth(C)$ is the maximal number of $mult$ gates in a path in $C$.
(We discuss below why we consider only
\end{itemize}

Given a function $f:[0,p]\mapsto[0,p]$  Paterson et al. showed in \cite{sqrt-mult-73} that a polynomial $\PP_{f,p}(x)$ can be realized by an arithmetic
circuit $C$ where  $\depth(C) = O(\log p)$ and $\size(C) = O(\sqrt{p})$.
Their construction can be extended to realize multivariate polynomials, however, size of the resulting circuit grows exponentially with the number of
variables.
For a $n$-variate polynomial, such as a polynomial that evaluates to the $k$ nearest neighbors, their construction has poor performance.

\subsection{Homomorphic Encryption}
Homomorphic encryption (HE) \cite{BGV12,GentrySTOC09} (see also a survey in\cite{HaleviShoup2014helib})
is an asymmetric encryption scheme that also supports + and $\times$ operations on ciphertexts.
More specifically, HE scheme is the tuple $\E = (Gen, Enc, Dec, Add, Mul)$, where:
\begin{itemize}
	\item $Gen(1^\lambda, p)$ gets a security parameter $\lambda$ and an integer $p$ and generates a public key $pk$ and a secret key $sk$.
	\item $Enc_{pk}(m)$ takes a message $m$ and outputs a ciphertext $\encrypted{m}$.
	\item $Dec_{sk}(\encrypted{m})$ gets a ciphertext $\encrypted{m}$ and outputs a message $m'$.

{\em Correctness} is the standard requirement that $m' = m$.

	\item $Add_{pk}(\encrypted{a}, \encrypted{b})$ gets two ciphertexts $\encrypted{a},\encrypted{b}$ and outputs
			a ciphertext $\encrypted{c}$.

{\em Correctness} is the requirement that $c = a + b \mod p$.

	\item $Mul_{pk}(\encrypted{a}, \encrypted{b})$ gets two ciphertexts $\encrypted{a},\encrypted{b}$ and outputs
			a ciphertext $c$.

{\em Correctness} is the requirement that $c = a \cdot b \mod p$.

\end{itemize}

\paragraph{Abbreviated syntax.}

To make our algorithms more intuitive we use $\encrypted{\cdot}_{pk}$ to denote a ciphertext.
When $pk$ is clear from the context we use an abbreviated syntax:

\begin{itemize}
	\item $\encrypted{a} + \encrypted{b}$ is short for $Add_{pk}(\encrypted{a}, \encrypted{b})$.
	\item $\encrypted{a} \cdot \encrypted{b}$ is short for $Mul_{pk}(\encrypted{a}, \encrypted{b})$.

	\item $\encrypted{a} + b$ is short for $Add_{pk}(\encrypted{a}, Enc_{pk}(b))$.
	\item $\encrypted{a} \cdot b$ is short for $Mul_{pk}(\encrypted{a}, Enc_{pk}(b))$.
\end{itemize}

Given these operations any polynomial $\PP(x_1, \ldots)$ can be realized as an arithmetic circuit and computed on the ciphertexts $\encrypted{x_1},\ldots$.
For example, in a client-server computation the client encrypts its data and sends it
to the server to be processed.
The polynomial the server evaluates is dependent on its input and the result is a ciphertext that is returned to the client.
The client then decrypts the output.
The semantic security of homomorphic encryption guaranties the server does not learn anything on the client's data. Similarly, the client does not learn
anything on the server except for the output.

The cost of evaluating an arithmetic circuit, $C$, with HE is $$Time = \overhead \cdot \size(C),$$ where $\overhead$ is the time to evaluate one $mult$-gate and
it varies with the underlying implementation of the
HE scheme. For example, for BGV scheme~\cite{BGV12} we have $\overhead = \overhead(C) = O\big(\big(\depth(C)\big)^3\big)$.

\paragraph{Threat Model.}
The input of the server in our protocol is $n$ points, $s_1, \ldots, s_n$ with their respective classes.
The input of the client is a query point $q$.
The output of the client is the class of $q$ as calculated by the server based on the classes of the nearest neighbors of $q$.
The server has no output. 
We consider adversaries that are computationally-bounded and semi-honest, i.e. the adversary follows the protocol but may try to learn additional information.
Our security requirement is that the adversary does not learn any information except what is explicitly leaked in the protocol.
Looking ahead, from the semantic security of the HE scheme, the view of the Client is exactly its input and its output.
In addition the view of the server is only its input (the server has no output),
up to negligible information that can feasibly be extracted from HE-encrypted messages.

\section{Our Contributions}
{\bf New ``HE friendly'' classifier.}
In this paper we introduce a new classifier
that we call the {\em $k$-ish nearest neighbor} classifier and is a variation of the $k$ nearest neighbors classifier (see Definition~\ref{def_kish_nn}).

%\begin{definition}[$k$-ish Nearest Neighbors Classification]
%\label{def_kish_nn}
%Given a database $S=\br{ x_1, \ldots, x_n}$, a parameter $0 \le k \le n$ and a query  $q$.
%the $k$-ish Nearest Neighbors classifier sets the class of $q$ to the majority of classes of
%$\kappa$ of its nearest neighbors, where $f_1(k) < \kappa < f_2(k)$.
%\end{definition}

Informally, this classifier considers $\kappa \approx k$ neighbors of a given query.
This relaxation allows us to
implement this classifier with an arithmetic circuit with low depth independent of the database size and has significantly better running times than $k$NN.

The success probability of the implementation we give in this paper depends on $\SD\big(\DD_{S,q}, \Gaussian(\mu,\sigma)\big)$, where $\mu = E(\DD_{S,q})$ and $\sigma^2 = Var(\DD_{S,q})$.
In future papers we intend to propose implementations that do not have this dependency.

{\bf System and experiments for secure $k$-ish NN classifier.}
We implemented our algorithms into a system that uses secure $k$-ish nearest neighbors classifier.
Our code based on HELib~\cite{HELib} and LiPHE~\cite{liphe} is provided for the community
to reproduce our experiments, to extend our results for
real-world applications, and for practitioners at industry or academy that wish to use these results for their
future papers or products.

{\bf A new approximation technique.}
Our low-depth implementation of the $k$-ish NN, is due to a new approximation technique we introduce.
We consider the sum $1/m \sum_1^n f(\encrypted{x_i})$, when $m$ and $n$ are integers, $f$ is an increasing invertible function
and $\encrypted{x_1}, \ldots, \encrypted{x_n}$ are ciphertexts.
We show how this sum can be approximated in a polynomial of degree independent of $n$. Specifically, our implementation
does not require a large ring size to evaluate correctly.

%We use this technique to compute an average $1/n \sum \encrypted{x_i}$, but this technique can be
%used to for any  increasing invertible function, for example, to compute the gradient decent to find a model for logistic regression: $\beta \assign 1/n \sum_{i=1}^n \sigma (-x_i^T\cdot\beta)x_i$ ,
%where $x_i$ are vector samples, $\beta$ is a vector model and $\sigma$ is the Sigmoid function.

%In a nut-shell, we use a new coin-toss (see below) algorithm to approximate this sum with $s'=\sum a_i$, a sum of Bernoulli random variables with $Pr[a_i = 1] = f(x_i)/n$.
%It is easy to see that $E(s') = s$, and the probability of error can be easily bounded $Pr[\abs{s-s'} > \delta s] < 2exp(-\frac{\delta s}{3})$.

In contrast,
previous techniques for computing similar sums (e.g. for computing average in ~\cite{vinod_avg}), either generate large intermediate values or are realized with a deep arithmetic circuit.

{\bf A novel technique for doubly-blinded coin tossing.}
Our non-deterministic approximation relies on a new algorithm we call {\em doubly-blinded coin toss}.
In a doubly-blinded coin toss the probability of the toss is $\encrypted{x}/m$, where $\encrypted{x}$ is a ciphertext and $n$ is a parameter.
The output of the toss is also  ciphertext.
To the best of our knowledge, this is the first efficient implementation of a coin toss where the probability depends on a ciphertext.
Since coin tossing is a basic primitive for many random algorithms, we expect our implementation of coin tossing to have a large impact on future research in HE.

%\vspace{-0.3cm}
%
%\begin{table}[h!]%[!htb]
%\setlength\tabcolsep{0pt} % let LaTeX compute intercolumn whitespace
%\footnotesize\centering
%\smallskip
%\begin{tabular*}{\columnwidth}{@{\extracolsep{\fill}}llll}
%  & Depth   & Size & Time \\
%
%\hline
%  Folklore (sorting) & $O(\log (ndg))$  & $O(n^2\sqrt{dg})$  & days (est.)$^*$ \\
%  %$\Omega(m)$ \\
%  Folklore (preprocessing) & $O(\log (dg))$ & $O(g^d d)$ & 2 hours \\
%  %$\Omega(m)$ \\
%  \textbf{This Work} & $O(\log (dg))$ & $O(n\sqrt{dg})$ & 20 minutes \\
%  %$(\log m)^{O(1)}$ \\
%\hline
%   \multicolumn{3}{l}{ \small{$^*$ Running with $n=10$ and $k=5$ took 18 hours. }}
%\end{tabular*}
%\caption{\small{
%Comparing our algorithm with folklore solutions. The middle two columns show the depth and the size of the arithmetic circuit for
%$s_1, \ldots, s_n \in \br{1,\ldots,g}^d$.
%The last column shows the running time for the settings described in Section~\ref{sec_experiments_geo_search}, i.e. $d=2$, $n=569$, $k=13$ and $g=100$.
%}}
%\label{table_comprison}
%\end{table}
%
%\vspace{-0.5cm}

\section{Techniques Overview}
We give an overview of our techniques that we used.
We first describe the intuition behind the $k$-ish NN and then
we give a set of reductions from the $k$-ish NN classifier to a doubly-blinded coin toss.

\paragraph{Replacing $k$NN with $k$-ish NN.}
Given a database $S$ and a query $q$, small changes in $k$ have small impact on the output of the $k$NN classifier.
We therefore define the $k$-ish NN, where for a specified $k$ the classifier may consider $k/2 < \kappa < 3k/2$.
With this relaxation, our implementation applies non-deterministic algorithms (see below) to find $k$-ish nearest neighbors.

\paragraph{Reducing $k$-ish NN to computing moments.}
For the distance distribution $\DD_{S,q}$, denote $\mu = E(\DD_{S,q})$ and $\sigma^2 = Var(\DD_{S,q})$ and
consider the set $\T = \sett{s_i}{\dist(s_i, q) < T}$, where $T = \mu + \Phi^{-1}(k/n)\sigma$ and $\Phi$ is the CDF of $\Gaussian(0,1)$.
For the case $\DD_{S,q}=\Gaussian(\mu, \sigma)$ we have $|\T|=k$, otherwise since $\dist(x_i, q) > 0$ are from a discrete set
we have $\Big\lvert{\abs{\T} - k}\Big\rvert < \SD\big(\DD_{S,q}, \Gaussian(\mu,\sigma)\big)\cdot T \cdot n$.

For distance distribution that are statistically close to Gaussian it remains to show how $\mu$ and $\sigma$ can be efficiently computed.
We remark that $\mu = \frac{1}{n}\sum \dist(s_i, q)$ and $\sigma = \sqrt{\mu^2 - \frac{1}{n}\sum \big(\dist(s_i, q)\big)^2 }$,
so it remains to show how the first two moments $\frac{1}{n}\sum \dist(s_i, q)$ and $\frac{1}{n}\sum \big(\dist(s_i, q)\big)^2$  can be efficiently computed.

\paragraph{Reducing computing moments to doubly-blinded coin-toss.}
We show how to compute $\frac{1}{n}\sum f(x_i)$ for an increasing invertible function $f$. For an average take $f(x_i) = x_i$ and for the second moment take $f(x_i) = x_i^2$.
Observe that $\frac{1}{n}\sum f(x_i) = \sum \frac{f(x_i)}{n}$ is approximated by
$\sum a_i$, where
\[
a_i =
\begin{cases}
1 & \text{ With probability } f(x_i)/n\\
0 & \text{otherwise}.
\end{cases}
\]

%The hidden assumption that $f(x_i) < n$ can be easily removed by repeating each $f(x_i)$ with multiplicity $\frac{n}{\max f(x_i)}$.

Thus we have reduced the problem to doubly-blinded coin-tossing with probability $\frac{f(x_i)}{n}$ (we remind that $x_1,\ldots,x_n$ are given as a ciphertexts).

\paragraph{Reducing a doubly-blinded coin-toss to $\isSmaller$.}
Given a ciphertext $\encrypted{x_i}$ and a parameter $n$, we wish to toss a coin with probability $\frac{f(x_i)}{n}$.
Uniformly draw a random value $r \leftarrow \br{0,\ldots, n}$. We observe that $Pr[r < f(x_i)] = f(x_i)/n$
and therefore is remains to test whether $r < f(x_i)$.

Since coin tossing is a basic primitive for many random algorithms, we expect our implementation of coin tossing to have a large impact on future research in HE.

%\begin{figure}[htbp]
%\begin{center}
%\input{alg.pdf_t}
%\end{center}
%\caption{Solving secure $k$-ish nearest neighbors.
%Compute $x_i = \dist(s_i, q)$ and denote $D=(d_1, \ldots, d_n)$, then,
%(i) given, $T$, the $k$-ish smallest value of $D$ we compare it to $x_i$ to determine the $k$-ish nearest neighbors.
%(ii) assuming we know the distribution of $D$, the $k$-ish smallest value can be computed from the moments of $D$. For example, for Gaussian distribution
%$T = \mu + \Phi(k/n)\sigma$.
%(iii) we show how to approximate a more general sum $1/m \sum_1^n f(x_i)$, for a increasing invertible function $f$.
%(iv) we approximate the sum by summing a series of coin tosses $\sum a_i$, where $Prob(a_i) = f(x_i)/m$.
%(v) we reduce a coin toss to a $\isSmaller$ polynomial which can be efficiently computed.
%%
%Since all these steps are embarrassingly parallel we get a low depth circuit.
%}
%\label{fig_alg}
%\end{figure}

\section{Protocol and Algorithms}
In this section we describe the $k$-ish nearest neighbors protocol and the algorithms it uses.
We describe it from top to down, starting with the $k$-ish nearest neighbors protocol.

\subsection{$k$-ish Nearest Neighbors}
The intuition behind our new classifier is that the classification of $k$NN does not change ``significantly'' when $k$ changes ``a little''.
See Figure~\ref{fig_kish_vs_knn} how the accuracy of $k$NN changes with $k$.
We therefore relax $k$NN into $k$-ish NN, where given a parameter $k$ the classifier considers $\kappa \approx k$.

\begin{definition}[$k$-ish Nearest Neighbors Classification]
\label{def_kish_nn}
Given a database $S=( s_1, \ldots, s_n)$, a parameter $0 \le k \le n$ and a query  $q$.
the $k$-ish Nearest Neighbors classifier sets the class of $q$ to be the majority of classes of
$\kappa$ of its nearest neighbors, where $f_1(k) < \kappa < f_2(k)$.
\end{definition}

In this paper we take $k/2 < \kappa < 3k/2$.

\subsection{$k$-ish Nearest Neighbors Classifier Protocol}
We give here a high-level description of our protocol.

The protocol has two parties a client and a server. They share common parameters: a HE scheme $\E$, the security parameter $\lambda$ and integers $p$ and $d$. 
The input of the server is a database $S=(s_1, \ldots, s_n)$ with their respective classes $class(1), \ldots, class(n)$,
where $s_i \in \ZZ_p^d$, and $class(i) \in \br{0,1}$ is the class of $s_i$.
The input of the client is query $q \in \ZZ_p^d$.

The output of the client is $class_q \in \br{0,1}$, which in high probability is the majority class of $\kappa$ neighbors of $q$, where $k/2 < \kappa < 3k/2$.
%$$Pr[\text{$class(q)$ is the majority of $\kappa$ neighbors}] = 
The server has no output.

Our solution starts by computing the set of distances $x_i = \dist(s_i, q)$.
Then it computes a threshold $T \assign \mu^* + \Phi^{-1}(k/n)\sigma^*$, where $\mu^* \approx E(\DD_{S,q})$ and $(\sigma^*)^2 \approx Var(\DD_{S,q})$ (see below how $\mu^*$ and $\sigma^*$ are computed).
With high probability we have $k/2 < |\br{s_i \mid x_i < T}| < 3k/2$.
Comparing $x_1,\ldots,x_n$ to $T$ is done in parallel, which keeps the depth of the circuit low.
The result of the comparison is
used to count the number of neighbors having class 0 and class 1.

To compute $\mu^*$ and $\sigma^*$ we use the identities
$\mu = 1/n \sum x_i$ and $\sigma = \sqrt{\mu^2 - 1/n \sum x_i^2}$,
and approximate $1/n \sum x_i$ and $1/n \sum x_i^2$ with an algorithm we describe in Section~\ref{sec_prob_avg}.

\paragraph{Reducing ring size.}
In the naive implementation,
we have a lower bound of $\Omega(p^2)$ for the ring size. That is because $x_1,\ldots,x_n = O(p)$ and we have the intermediate values $(\mu^*)^2, \mu_2 = O(p^2)$.
Since the size and depth of polynomial interpolations we use depend on the ring size we are motivated to keep the ring size small.

To keep the ring size small we use a representation we call {\em digit-couples representation}.
\begin{definition}[base-$p$ representation]
For $p\in\NN$ and $v\in \br{0,\ldots, p^2-1}$ base-$p$ representation of $v$ is
$\low(v) = v \mod p$ and 
$\high(v) = \floor{v/p}$.
\end{definition}
We then assign
\begin{equation*}
\begin{split}
\low(\mu_2^*) & \assign 1/n \sum x_i^2 \mod p\\
\high(\mu_2^*) & \assign \frac{1}{np} \sum x_i^2\\
\end{split}
\end{equation*}
where the modulo is done implicitly by arithmetic circuit.
Similarly, we assign
\begin{equation*}
\begin{split}
\low\big((\mu^*)^2\big) & \assign \mu^* \cdot \mu^*  \mod p,\\
\high\big((\mu^*)^2\big) & \assign \mu^* \cdot \mu^* / p,
\end{split}
\end{equation*}
where the modulo is done implicitly by arithmetic circuit.
We then assign
\[
\sigma^* =
\begin{cases}
\sqrt{\low\big((\mu^*)^2\big) - \low\big((\mu^*_2)\big)} & \text{if } \high\big((\mu^*)^2\big) - \high(\mu^*_2) = 0,\\
\sqrt{\low\big((\mu^*)^2\big) - \low\big((\mu^*_2)\big) + p} & \text{if } \high\big((\mu^*)^2\big) - \high(\mu^*_2) = 1,\\
\sqrt{\big(\high\big((\mu^*)^2\big) - \high(\mu^*_2)\big)p} & \text{otherwise.}
\end{cases}
\]
In Lemma~\ref{lemma_sigma_dagger} we prove that $\sigma^* \approx \sqrt{(\mu^*)^2 - \mu^*_2}$.

We next give a detailed description of our protocol.

\begin{protocol}
\caption{$k$-ish Nearest Neighbor Classifier Protocol\label{alg_geo_search}}
{\begin{tabbing}
	\textbf{Shared Input:} integers $p,d>1$.\\
	\textbf{Client Input:} a point $q \in \ZZ_p^d$ and a security parameter $\lambda$.\\
	\textbf{Server Input:} \= integers $k<n$, points $s_1,\ldots,s_n\in \ZZ_p^d$.\\
							\>a vector $class\in\br{0,1}^n$, s.t. $class(i)$ is the class of $s_i$.\\

	\textbf{Client Output:} \=$class_q\in \br{0,1}$, the majority class of $\kappa$ nearest neighbors of $q$\\
			 				\>where $k/2 < \kappa < 3k/2$ with high probability\\
\end{tabbing}}

%\vspace{-0.5cm}

{\bf Client performs:}\\
	\quad Generate keys $(sk,pk) \assign Gen(1^\lambda, p)$\label{line_key_generation}\\
	\quad $\encrypted{q} \assign Enc_{pk}(q)$\\
	\quad {\bf Send} $(pk, \encrypted{q})$ to the server\label{line_send_to_server}\\

	\BlankLine

{\bf Server performs:}\\
	\quad	\For {$\each$ $i \in 1, \ldots, n$ } {
	\quad		$\enc{x_i} \assign  \computeDist(\encrypted{q}, s_i)$\label{line_compute_distance} }

	\BlankLine

	\quad	$\encrypted{\mu^*} \assign \text{\bf approximate }  \frac{1}{n} \sum \encrypted{x_i}$ \label{line_compute_average}\\

	\BlankLine

%	\tcc{Compute the digit-couples representation of $\mu_2$}
%	\quad	$\encrypted{\low(\mu_2)} \assign \text{\bf approximate } \frac{1}{n} \sum \encrypted{x_i}^2 \mod p$\\
%	\quad	$\encrypted{\middledig(\mu_2)} \assign \text{\bf approximate } \frac{1}{n\sqrt{p}} \sum \encrypted{x_i}^2 \mod p$\\
%	\quad	$\encrypted{\high(\mu_2)} \assign \text{\bf approximate } \frac{1}{np} \sum \encrypted{x_i}^2 \mod p$\\

	\quad $\big(\encrypted{\low(\mu^*_2)}, \encrypted{\high(\mu^*_2)} \big) \assign \text{ digit-couples rep. of } \frac{1}{n}\sum \encrypted{x_i}^2$\label{line_compute_second_moment}\\

	\BlankLine

%	\tcc{Compute the digit-couples representation of $\mu^2$}
%	\quad	$\encrypted{\low(\mu^2)} = \encrypted{\mu} \cdot \encrypted{\mu}  \mod p$\\
%	\quad	$\encrypted{\middledig(\mu^2)} = \encrypted{\mu} \cdot \encrypted{\mu} / \sqrt{p}  \mod p$\\
%	\quad	$\encrypted{\high(\mu^2)} = \encrypted{\mu} \cdot \encrypted{\mu}  / p$\\

	\quad $\big(\encrypted{\low\big((\mu^*)^2\big)}, \encrypted{\high\big((\mu^*)^2\big)} \big) \assign \text{ base-$p$ rep. of } (\mu^*)^2$\label{line_compute_mu_sqrt}\\

	\BlankLine

	\quad	$\encrypted{\sigma^*} \assign$ {\bf approximate} $\sqrt{(\mu^*)^2 - \mu^*_2}$\label{line_compute_sigma}\\
%	\quad	$\encrypted{\sigma^*} \assign \PP_{(\cdot=0)}(\encrypted{\high(\mu^2)} - \encrypted{\high(\mu_2)}) \cdot \sqrt{\encrypted{\low(\mu^2)} - \encrypted{\low(\mu_2)}} +$ \\
%	\quad\quad\quad\quad $ \PP_{(0<\cdot<\sqrt{p})}(\encrypted{\high(\mu^2)} - \encrypted{\high(\mu_2)}) \cdot \sqrt{\encrypted{\middledig(\mu^2)} - \encrypted{\middledig(\mu_2)}}+$ \\
%	\quad\quad\quad\quad $ \PP_{(\sqrt{p}<\cdot)}(\encrypted{\high(\mu^2)} - \encrypted{\high(\mu_2)}) \cdot \sqrt{\encrypted{\high(\mu^2)} - \encrypted{\high(\mu_2)}}$ \\

	\BlankLine

	\quad	$\encrypted{T^*} \assign \encrypted{\mu^*} + \round{\Phi^{-1}(\frac{k}{n})} \encrypted{\sigma^*}\label{line_compute_threshold}$\\

	\BlankLine

	\quad	$\enc{C_0} \assign \sum_{i=1}^n   \isSmaller(\encrypted{x_i},  \encrypted{(T^*)})  \cdot (1-class(i))$ \label{line_compute_class_0}\\
	\quad	$\enc{C_1} \assign \sum_{i=1}^n   \isSmaller(\encrypted{x_i},  \encrypted{(T^*)})  \cdot class(i)$ \label{line_compute_class_1} \\

	\BlankLine

	\quad	$\encrypted{class_q} \assign \isSmaller(\encrypted{C_0}, \encrypted{C_1})$\label{line_compare_classes}\\
	\quad	{\bf Send} $\encrypted{class_q}$ to the client\\

	\BlankLine

{\bf Client performs:}\\
	\quad	$class_q \assign Dec_{sk}(\encrypted{class_q})$

\end{protocol}

\paragraph{Protocol Code Explained.}
Protocol~\ref{alg_geo_search} follows the high level description given above.
%It has two parties. One party is a server and the other is a client.
%The two parties have shared input $k,p$ and $d$.
%The server has additional input, a database $s_1, \ldots, s_n$ and their classes $class(1), \ldots, class(n)$.  The server has no output.
%The client has additional input, a security parameter $\lambda$ and a query $q$. The output of the client is, $class_q$, the majority of $\kappa$ nearest neighbors
%of $q$, where $k/2 < \kappa < 3k/2$ with high probability.
%
%
As a first step the client generates a key pair $(sk,pk)$, encrypts $q$ and sends $(pk, \enc{q})$
to the Server (Line~\ref{line_key_generation}-\ref{line_send_to_server}).

The server computes the distances, $x_1, \ldots, x_n$ (Line~\ref{line_compute_distance}), where $\computeDist(s_i, q)$ computes the distance between $s_i$ and $q$.

The server then computes an approximation to the average, $\mu^*\approx 1/n \sum x_i$, (Line~\ref{line_compute_average}) by calling $\ProbAvg$ (Algorithm~\ref{alg_avg}).

In Line~\ref{line_compute_second_moment} the sever computes the base-$p$ representation of the approximation of the second moment, $\mu^*_2 = 1/n \sum x_i^2$.
\begin{equation*}
\begin{split}
\encrypted{\low(\mu^*_2)} & \assign \text{\bf approximate } \frac{1}{n} \sum \encrypted{x_i}^2 \mod p\\
\encrypted{\high(\mu^*_2)} & \assign \text{\bf approximate } \frac{1}{np} \sum \encrypted{x_i}^2 \mod p.\\
\end{split}
\end{equation*}
The approximations are done by calling $\ProbAvg$
and setting $f(x) = x^2$ and $m=n$ and $m=np$ respectively (see Section~\ref{sec_prob_avg} for details on $\ProbAvg$ parameters).
We remind that the modulo operation is performed implicitly by the HE operations.

Then  in Line~\ref{line_compute_mu_sqrt} the server computes the base-$p$ representation of $(\mu^*)^2$.
It sets:
\begin{equation*}
\begin{split}
\encrypted{\low\big((\mu^*)^2\big)} & \assign \encrypted{\mu^*} \cdot \encrypted{\mu^*}  \mod p\\
\encrypted{\high\big((\mu^*)^2\big)} & \assign \PP_{(\cdot)^2/p}(\mu^*)
\end{split}
\end{equation*}
Next the server computes $\sigma^*$ in Line~\ref{line_compute_sigma} where the square root is done with $\PP_{\sqrt{\cdot}}$ and $\PP_{\sqrt{(\cdot)+/p}}$.
In Line~\ref{line_compute_threshold} a threshold $T^*$ is computed. Since $k$ and $n$ are known $\Phi^{-1}(k/n)$ can be computed without homomorphic operations.

In Line~\ref{line_compute_class_0} the server counts the number of nearest neighbors that have class 0.
This is done by summing $\isSmaller(\encrypted{x_i},  \encrypted{(T^*)})  \cdot (1-class(i))$ for $i = 1,\ldots,n$.
Since $\isSmaller(\encrypted{x_i},  \encrypted{(T^*)}) = 1$ iff $x_i < T^*$ and $(1-class(i)) = 1$ iff $class(i)=0$ the sum in Line~\ref{line_compute_class_0} adds up
to the number of neighbors having class 0.
Similarly in Line~\ref{line_compute_class_1} the server counts the number of neighbors having class 1.

The server compares $\encrypted{C_0}$ and $\encrypted{C_1}$ in Line~\ref{line_compare_classes} to get $\encrypted{class_q}$ the majority class of the $\kappa$ nearest neighbors of $q$.
Then it sends $class_q$ to the client who can decrypt it and get the classification of $q$.

%\begin{theorem}
%Given $n$ points $s(1), \ldots, s(n) \in [\frac{\sqrt{p}}{d}]^d$, and a query point $q \in [\frac{\sqrt{p}}{d}]^d$. Denote by $D = \br{d(1), \ldots, d(n)}$ the set of $\ell_1$ distances,
%$d(i) = \norm{s(i) - q}$.
%For any $0 \le k \le n$,
%Algorithm~\ref{alg_geo_search} outputs a compact representation of $\br{ aux(i) \mid d(i) \in \min^{O(k)} D  }$ of size $O(k)$ of the auxiliary data associated
%with the $O(k)$ nearest neighbors of $q$. The polynomials that compute the sketches have
%degree $O(\log p + \log n)$ and size $O(n\log  n)$.
%\end{theorem}

\begin{restatable}[]{theorem}{restateTheoremKNN}
Let $k,p,d\in\NN$ and $S = (s_1, \ldots, s_n) \in \ZZ_p^d$, where $class(i)$ is a class associated with $s_i$, also let $q \in \ZZ_p^d$ such that
$SD\big(\DD_{S,q}, \Gaussian(\mu, \sigma)\big) = s$, where $(\mu,\sigma)$ are the average and standard deviation of $\DD_{S, q}$.
Then:
\\
(i) The client's output in $\KNearestNeighbors$ is $class_q$ which is the majority class of $\kappa$ nearest neighbors of $q$ in $S$, where
$$Pr[\abs{k-\kappa} > \delta k] < 2\exp\Bigg(-O\bigg(\frac{\delta k(\sigma^2+\mu^2)}{\mu+\Phi^{-1}(k/n)\sigma}\bigg)\Bigg) + 2\exp\Bigg(-O\bigg(\frac{\mu\delta^2k^2\sigma}{\mu s + \sigma^2 s}\bigg)\Bigg).$$

Let $\KNearestNeighbors$ denote the arithmetic circuit evaluated by the server in Protocol~\ref{alg_geo_search} and $\isSmaller$ and $\computeDist$ denote the arithmetic circuits
comparing ciphertexts and computing the distance between two points, respectively.\\
(ii) $\depth(\KNearestNeighbors) = O(\depth(\computeDist) + \log p + \depth(\isSmaller)),$ and\\
(iii) $\size(\KNearestNeighbors) = O\big(n\cdot\size(\computeDist) + \sqrt{p} + n\cdot\size(\isSmaller)\big),$\\

where $\isSmaller$ is an arithmetic circuit comparing two ciphertexts, and $\computeDist$ is an arithmetic circuit computing the distance between two vectors.
\end{restatable}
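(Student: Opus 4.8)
The plan is to prove the three parts separately; part~(i) is the substantive one, and (ii)--(iii) amount to a line-by-line accounting of Protocol~\ref{alg_geo_search}. For part~(i), the assertion that $class_q$ is the majority class of the $\kappa:=\abs{\set{i : x_i<T^*}}$ nearest neighbours of $q$ is immediate from Lines~\ref{line_compute_class_0}--\ref{line_compute_class_1}: $\isSmaller(\encrypted{x_i},\encrypted{T^*})$ equals $1$ exactly on the $\kappa$ closest points, so $C_0,C_1$ count those of class $0$ and $1$, and the final $\isSmaller(\encrypted{C_0},\encrypted{C_1})$ returns the majority (ties at the threshold broken by convention). Thus part~(i) reduces to bounding $\Pr[\abs{\kappa-k}>\delta k]$.

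For that bound I would isolate the two sources of deviation. \emph{Randomness of $\ProbAvg$.} By the correctness of $\ProbAvg$ (Algorithm~\ref{alg_avg}), $\mu^*$ and the digits of $\mu_2^*$ are sums of $n$ independent bounded variables whose expectations are $\mu=\tfrac1n\sum x_i$ and $\mu_2=\tfrac1n\sum x_i^2$; the variance of each such estimator is $O(\mu)$ (resp. $O(\mu_2)$), so Bernstein's inequality gives $\Pr[\abs{\mu^*-\mu}>t_1]\le 2\exp(-\Omega(t_1^2/(\mu+t_1)))$ and $\Pr[\abs{\mu_2^*-\mu_2}>t_2]\le 2\exp(-\Omega(t_2^2/(\mu_2+t_2)))$. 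Off those events, Lemma~\ref{lemma_sigma_dagger} together with $\sqrt a-\sqrt b=O(\abs{a-b}/\sigma)$ for $a,b$ of order $\sigma^2$ gives $\abs{\sigma^*-\sigma}=O((\mu t_1+t_2)/\sigma)$, hence $\abs{T^*-T}\le\epsilon_T:=O(t_1+\Phi^{-1}(k/n)(\mu t_1+t_2)/\sigma)$ for the ideal threshold $T=\mu+\Phi^{-1}(k/n)\sigma$. \emph{Statistical distance $s$.} With $\kappa_T=\abs{\set{i : x_i<T}}$, the discrete-CDF estimate from the Techniques Overview gives $\abs{\kappa_T-k}\le s\,T\,n$, and at most $n(\tfrac{c\epsilon_T}{\sigma}+2s\epsilon_T)$ of the $s_i$ lie in the window $[T-\epsilon_T,T+\epsilon_T]$ (the Gaussian density at the $(k/n)$-quantile, of order $1/\sigma$, plus an $O(s)$ slack per integer point). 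Adding up, $\abs{\kappa-k}\le sTn+n(\tfrac{c\epsilon_T}{\sigma}+2s\epsilon_T)$ off the two Bernstein failure events; choosing $t_1,t_2$ (hence $\epsilon_T$) so that the right-hand side is $\le\delta k$, substituting $\mu_2=\sigma^2+\mu^2$ and $T=\mu+\Phi^{-1}(k/n)\sigma$, and collecting the two Bernstein tails, then yields the stated pair of exponentials. The hard part will be exactly this last step: splitting the budget $\delta k$ among the three additive errors, tracking how the resulting $t_1,t_2$ enter the Bernstein exponents, and pinning down the local density of $\DD_{S,q}$ near $T$; everything before it is routine concentration and first-order error propagation.

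Parts~(ii) and~(iii) are a walk through Protocol~\ref{alg_geo_search}. The longest mult-path in $\KNearestNeighbors$ is $\computeDist\to\ProbAvg\to(\text{a squaring and }\PP_{(\cdot)^2/p})\to\sigma^*\to T^*\to\isSmaller(\encrypted{x_i},\encrypted{T^*})\to\isSmaller(\encrypted{C_0},\encrypted{C_1})$, contributing $\depth(\computeDist)$, $\depth(\ProbAvg)$, $O(\log p)$ (the interpolations $\PP_{(\cdot)^2/p}$, $\PP_{\sqrt{\cdot}}$, $\PP_{(\cdot=0)}$ each have depth $O(\log p)$ by~\cite{sqrt-mult-73}), a further $O(\log p)$ for $\sigma^*$, $O(1)$ for scaling by the plaintext $\round{\Phi^{-1}(k/n)}$ in $T^*$, and $\depth(\isSmaller)$ twice; the $n$ distance computations and the $n$ comparisons against $T^*$ are performed in parallel and add no depth, and using $\depth(\ProbAvg)=O(\log p+\depth(\isSmaller))$ from the analysis of $\ProbAvg$ (it reduces to coin tosses, which reduce to $\isSmaller$) gives~(ii). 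For size, the $n$ calls to $\computeDist$ cost $n\cdot\size(\computeDist)$, the constantly many $\ProbAvg$ calls cost $O(n\cdot\size(\isSmaller)+\sqrt p)$, the interpolations cost $O(\sqrt p)$ in total by~\cite{sqrt-mult-73}, the $n$ comparisons in Lines~\ref{line_compute_class_0}--\ref{line_compute_class_1} cost $n\cdot\size(\isSmaller)$, and the final comparison $\size(\isSmaller)$; summing gives~(iii). This part is pure bookkeeping once the cost bounds for $\ProbAvg$ and for the Paterson~et~al.\ interpolation are in hand.
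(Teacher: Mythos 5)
Your proposal follows the paper's own proof essentially step for step: parts (ii)--(iii) by the same per-line accounting of Protocol~\ref{alg_geo_search} (parallel $\computeDist$ and $\isSmaller$ instances contributing only to size, the $O(\log p)$/$O(\sqrt{p})$ interpolation bounds from Paterson et al., and the $\ProbAvg$ bounds), and part (i) by the same chain: concentration for $\mu^*,\mu_2^*$, propagation through Lemma~\ref{lemma_sigma_dagger} to $\sigma^*$ and then to $T^*$, and finally a CDF argument near $T$ that combines the statistical distance $s$ with the bounded Gaussian density before rescaling the error budget to $\delta k$. The only cosmetic differences are your use of Bernstein where the paper invokes its multiplicative Chernoff bound (Theorem~\ref{lemma_chernoff}), and that the closing algebra you defer as ``the hard part'' is precisely the paper's substitution $\delta' = \delta k/\big(T(s+\sqrt{2\pi}/\sigma)\big)$.
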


The proof of this theorem is given in Section~\ref{sec_analysis_knn}.
In the next subsection we describe how $\mu$ and $\mu_2$ are computed efficiently in  arithmetic circuit model.

\paragraph{Increasing Success Probability.}
Since our protocol includes non-deterministic elements (the doubly-blinded coin-toss), it may fail with some probability.
The success probability can be increased by repeating the protocol multiple times and taking the majority of replies.

\paragraph{Extension to multiple database owners.}
Protocol~\ref{alg_geo_search} describes a protocol between a client and a server, however, it can be extended to a protocol where the database is distributed
among multiple data owners. The evaluation of the arithmetic circuit can then be carried collaborately or by a designated party.
In this extension the parties use multi-key HE (for example~\cite{multiKeyFHE1,multiKeyFHE2,multiKeyFHE3}) where a mutual public key is generated and
each data owner encrypt its share of the database. The protocol continues as in Protocol~\ref{alg_geo_search} to compute $\encrypted{class_q}$, which
can be decrypted by the client.

\paragraph{Extension to other distributions.}
Protocol~\ref{alg_geo_search} assumes the distribution of the distance distribution, $\DD_{S,q}$, is statistically close to Gaussian.
To extend Protocol~\ref{alg_geo_search} to another distribution $X$, the protocol needs to compute the inverse of the cumulative distribution function,
$CDF_X^{-1}(k/n)$, for any $0 < k/n < 1$.
The probability of failure will then depend on $\max CDF_X'(T)$, which intuitively bounds the change in number of nearest neighbors as $T$ changes.

\subsection{Algorithm for Computing $1/m \sum_{i=1}^n f(\encrypted{d_i})$}
\label{sec_prob_avg}
In this section we show how to efficiently approximate sums of the form $\frac{1}{m}\sum_1^n f(\encrypted{x_i})$, where $n$ and $m$ are integers, $f$ is an increasing invertible function and
$x_1,\ldots,x_n$ are ciphertexts.

\begin{algorithm}
\caption{$\ProbAvg(\encrypted{x_1}, \ldots, \encrypted{x_n})$\label{alg_avg}}
{\begin{tabbing}
	\textbf{Parameters:} \=Integers, $p,n,m > 0$, an increasing invertible function $f : \br{0,\ldots, p-1} \mapsto [0,m]$.\\
%	\textbf{Parameters:} \=A uniform random number generator $R : \NN \to [m]$.\\
	\textbf{Input:} \=$x_1,\ldots,x_n \in \br{0,\ldots,p-1}$.\\
	\textbf{Output:} \=A number $x^* \in \br{0,\ldots,p-1}$ such that
			$Pr[\abs{\chi - x^*} > \delta] < 2e^{-2n\delta^2}$,\\
			\>where $\chi = \lceil 1/m \sum f(x_i) \rfloor \mod p$.\\
\end{tabbing}}

\vspace{-0.3cm}
	\For {$i \in 1, \ldots, n$} {
		$\encrypted{a_i} \assign $ toss a doubly-blinded coin with probability $\frac{f(x_i)}{m}$ \label{line_set_ai}\\
	}

	$\encrypted{x^*} \assign  \sum_{i=1}^{n} \encrypted{a_i} $   \label{line_add_ai}\\

	\Return $\encrypted{x^*}$
\end{algorithm}

\paragraph{Algorithm Overview.}
In Line~\ref{line_set_ai} the algorithm tosses $n$ coins with probabilities $\frac{f(x_1)}{m}, \ldots, \frac{f(x_n)}{m}$.
The coins are tossed doubly-blinded, which means the probability of each coin is a ciphertext, and the output of the toss is also a ciphertext.
See Algorithm~\ref{alg_coin_toss} to see an implementation of a doubly-blinded coin-toss.
The algorithm then returns the sum of the coin tosses, $\sum a_i$, as an estimation to $\frac{1}{m} \sum f(x_i)$.

%\paragraph{Extending to any monotonic function}
%It is easy to see that for any increasing invertible function $f$, applying the inverse $f^{-1}$

\begin{restatable}[]{theorem}{restateProbAvg}
%Let $x \in [L]^n$ be an $n$ dimensional vector. Also let $e \ge 1$ and $m$ be parameters.
%Then denoting $\mu = \frac{1}{m} \sum_{i=1}^n x_i^e$, a number $X^*$ can be computed by a polynomial of degree $O(\log L)$ and size $O(\ceil{\frac{L^e}{m}} n)$ such that
%$$Pr\left(\abs{ X^* - \mu} > \delta\right) < 2exp\left(-\frac{\mu\delta^2}{3}\right),$$
For any $p,m,n\in\NN$ and $f:[0,p] \mapsto [0,m]$ an increasing invertible function
Algorithm~\ref{alg_avg} describes an arithmetic circuit whose
input is $n$ integers 
$d_1,\ldots,d_n \in \br{0,\ldots,p-1}$ 
and output is $\chi^*$ such that,\\

(i) $Pr\left( \abs{\chi^* - \chi} > \delta\chi\right) < 2\exp(-\frac{\chi\delta^2}{3}),$
where $\chi = \frac{1}{m}\sum f(x)$,

(ii) $\depth(\ProbAvg) = O(\depth(\isSmaller))$,

(iii) $\size(\ProbAvg) = O(n\cdot\size(\isSmaller))$,\\
%
%where $\isSmaller$ is an arithmetic circuit comparing two ciphertexts $\encrypted{a}$ and $\encrypted{b}$ and outputting a bit, $\isSmaller(a,b)=1$ iff $a<b$.
where $\isSmaller$ is an arithmetic circuit comparing two ciphertexts.
\end{restatable}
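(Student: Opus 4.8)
The plan is to analyze Algorithm~\ref{alg_avg} as three separate tasks: the probabilistic accuracy bound (i), and the two structural bounds on circuit depth (ii) and size (iii). For part (i), I would start by observing that the output $\chi^* = \sum_{i=1}^n \encrypted{a_i}$ is a sum of $n$ independent Bernoulli random variables, where $a_i$ has success probability $f(x_i)/m$ (this is exactly the guarantee of the doubly-blinded coin-toss, Algorithm~\ref{alg_coin_toss}). Hence $\E[\chi^*] = \sum_{i=1}^n f(x_i)/m = \frac{1}{m}\sum f(x_i) = \chi$. The independence of the tosses comes from using fresh randomness $r \gets \{0,\ldots,m\}$ in each invocation. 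Then I would apply the multiplicative Chernoff bound to the sum of independent Bernoulli variables: for $\delta \in (0,1)$, $\Pr[|\chi^* - \chi| > \delta\chi] < 2\exp(-\chi\delta^2/3)$, which is precisely the claimed inequality. I should note the (minor) rounding subtlety: the theorem statement compares against $\chi$ which is written with a $\lceil\cdot\rfloor \bmod p$ in the algorithm's output spec, so I would either absorb the $\pm 1$ rounding error into the $\delta\chi$ slack or argue that $\chi^*$ is already integer-valued so no rounding is applied on the output side.

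For parts (ii) and (iii), the key point is that the only nontrivial computation inside the loop is the doubly-blinded coin-toss, and by the reduction sketched in the Techniques Overview, tossing a coin with probability $f(x_i)/m$ reduces to drawing a public random $r$ and evaluating $\isSmaller(r, f(x_i))$ — equivalently $\isNeg(r - f(x_i))$ composed with the fixed interpolation polynomial $\PP_f$. Since $r$ is a plaintext constant (drawn by the server), $\PP_f(\encrypted{x_i})$ has depth $O(\log p)$ and the comparison is a single $\isSmaller$ circuit; depending on whether the paper's $\isSmaller$ already dominates $\log p$ I would write the loop-body depth as $O(\depth(\isSmaller))$, matching the claim. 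The final line $\encrypted{x^*} \gets \sum_i \encrypted{a_i}$ is a sum of $n$ ciphertexts, contributing only $add$-gates, which do not count toward $\size$ or $\depth$. The $n$ coin-tosses are performed in parallel (they share no data dependency), so $\depth(\ProbAvg) = O(\depth(\isSmaller))$, giving (ii); and $\size(\ProbAvg) = \sum_{i=1}^n O(\size(\isSmaller)) = O(n\cdot\size(\isSmaller))$, giving (iii). I would need to confirm that the interpolation $\PP_f$ is folded into the coin-toss primitive's cost accounting so it does not add a separate $\sqrt p$ term per iteration; if it does, I would state (iii) as $O(n(\size(\isSmaller) + \sqrt p))$ or argue $\size(\isSmaller) = \Omega(\sqrt p)$ already.

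The main obstacle I anticipate is part (i): making the independence of the $n$ coin-tosses rigorous and pinning down exactly what randomness model the doubly-blinded coin-toss operates under, since the output $a_i$ is a ciphertext and the "probability" is really over the server's plaintext coin $r$ conditioned on the encrypted value $x_i$. I would need to invoke the correctness guarantee of Algorithm~\ref{alg_coin_toss} precisely — that $\Dec_{sk}(\encrypted{a_i})$ is distributed $\mathrm{Bernoulli}(f(x_i)/m)$ over the choice of $r_i$, and that distinct $r_i$ are drawn independently — and then the Chernoff argument is standard. A secondary, more cosmetic obstacle is reconciling the two slightly different error bounds that appear ($2e^{-2n\delta^2}$ in the algorithm box via Hoeffding versus $2\exp(-\chi\delta^2/3)$ in the theorem via the multiplicative Chernoff bound); I would simply prove the multiplicative form stated in the theorem and remark that the additive Hoeffding form in the algorithm box follows analogously.
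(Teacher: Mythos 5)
Your proposal matches the paper's proof essentially step for step: part (i) is the same argument (the $a_i$ are independent Bernoulli variables with success probabilities $f(x_i)/m$, so $E[\sum a_i]=\chi$, and the two one-sided multiplicative Chernoff bounds combine into the stated two-sided bound), and parts (ii)--(iii) follow exactly as you say from the $n$ coin tosses being data-independent and hence parallel, with the final summation contributing only $add$-gates. Your one worry about an extra $\PP_f$ interpolation cost is moot in the paper's construction: Algorithm~\ref{alg_coin_toss} never evaluates $f$ homomorphically but instead computes $r'=\lceil f^{-1}(r)\rceil$ on the \emph{plaintext} coin $r$ and returns $\isSmaller(\encrypted{x_i},r')$, so each toss costs exactly one $\isSmaller$ circuit and no $\sqrt{p}$ term needs to be absorbed.
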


The proof is technical and follows from Chernoff inequality. The full proof is given in Section~\ref{sec_analysis_prob_avg}.

%Some interesting cases for this theorem are:
%\begin{itemize}
%\item take $m=n$, $e=1$ to compute the average $\frac{1}{n}\sum_{i=1}^n x(i)$.
%\item take $m=n$, $e=2$ to compute $\frac{1}{n}\sum_{i=1}^n x(i)^2$.
%\item take $m=pn$, $e=2$ to compute $\frac{1}{pn}\sum_{i=1}^n x(i)^2$.
%\end{itemize}
%
%Naively, $\frac{1}{m} \sum x_i^e$ can be realized by computing $\sum x_i^e$ and outputting the pair $(\sum x_i^e, m)$. In this solution the user is supposed to
%perform the division after decrypting.
%When computed over an arithmetic circuit over $\ZZ_p$, it is imperative that $p > \sum x_i^e$, something that is not efficient when $m$ is large.
%It is also not practical to use $(\sum P(x_i), m)$ as the input to a second step of an algorithm.
%On the other end of the spectrum, we can keep values in their binary form (one ciphertext for each bit) and compute $\frac{1}{m} \sum x_i^e$ in its binary form.
%This indeed has an efficient representation when $m$ is large, however, the polynomial that computes this output has a large degree.
%
%Our proposed polynomial, $A$, can be realized over a ring $\ZZ_p$, while enjoying the best of all both worlds:
%keeping the ring size small $p = \frac{1}{m}\sum x(i)^e + O(1)$, while still having a small degree.
%This makes $A$ a good approximation for $\frac{1}{m}\sum x_i^e$ under HE.

\subsection{Doubly Blinded Coin Toss}
\label{sec_coin_toss}

\begin{algorithm}
\caption{$\CoinToss(\encrypted{x})$\label{alg_coin_toss}}
{\begin{tabbing}
	\textbf{Parameters:} Two integers $p\in\NN$, $m\in\RR$ and an increasing invertible function $f : [0, p-1] \mapsto [0,m]$\\
	\textbf{Input:} A number $\encrypted{x}$, s.t. $x\in\br{0,\ldots,p-1}$.\\
	\textbf{Output:} A bit $\encrypted{b}$, such that $Pr[b = 1] = f(x)/m$.
\end{tabbing}}

%\vspace{-0.3cm}
	{\bf Draw} $r \leftarrow [0,m]$\label{line_draw}\\
	$r' \assign \ceil{f^{-1}(r)}$\label{line_inverse_f}\\

	\Return $\isSmaller(\encrypted{x}, r')$\label{line_is_smaller_const_return}
\end{algorithm}

\paragraph{Algorithm Overview.}
The $\CoinToss$ algorithm uniformly draws a random value $r$ (in plaintext) from $[0,m]$ (Line~\ref{line_draw}).
Since $r$ is not encrypted, and $f$ is increasing and invertible, it is easy to compute $\ceil{f^{-1}(r)}$ (Line~\ref{line_inverse_f}).
The algorithm then returns $\isSmaller(x, r')$ which returns 1 with probability $f(x)/m$.

\paragraph{$\CoinToss$ as an Arithmetic Circuit.}
Algorithm~\ref{alg_coin_toss} draws a number $r$ from the range $[0,m]$ and computes $f^{-1}(r)$, which are operations
that are not defined in an arithmetic circuit. To realize $\CoinToss$ as an arithmetic circuit we think of a family of circuits:
$\CoinToss_r$ for $r \in [0,m]$.
An instantiation of $\CoinToss$ is then made by drawing $r \leftarrow [0,m]$ and taking $\CoinToss_r$.

The proofs of correctness and the size and depth bounds of the arithmetic circuit implementing Algorithm~\ref{alg_coin_toss} are given in Section~\ref{sec_analysis_coin_toss}.

\section{Analysis}
In this section we prove the correctness and efficiency of our algorithms.
Unlike the algorithms that were presented top-down, reducing one problem to another simpler problem, we give the proofs bottom up as analyzing the efficiency of one algorithm builds upon the
efficiency of the simpler algorithm.

\subsection{Analyzing Doubly Blinded Coin Toss}
\label{sec_analysis_coin_toss}
In this section we prove the correctness and the bounds of the $\CoinToss$ algorithm given in Section~\ref{sec_coin_toss}.

%\begin{restatable}[]{theorem}{restateCoinTossAlg}
\begin{theorem}
\label{lemma_coin_toss}
For $p\in\NN$, $m\in\RR$ and an increasing invertible function $f:[0,p-1]\mapsto[0,m]$
Algorithm~\ref{alg_coin_toss} gets an encrypted input $\encrypted{x}$ and outputs an encrypted bit $\encrypted{b}$ such that
\\
(i) $Pr[b=1] = f(x)/m$.\\
%Also, as an arithmetic circuit
%
(ii) $\depth(\CoinToss) = O(\isSmaller)$, and\\
(iii) $\size(\CoinToss) = O(\isSmaller)$,
where $\isSmaller$ is a circuit that compares a ciphertext to a plaintext:
$\isSmaller(\encrypted{x}, y) = 1$ if $x<y$ and 0 otherwise.
%\end{restatable}
\end{theorem}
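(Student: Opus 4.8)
The plan is to verify the three claims separately, each of which is essentially a direct unwinding of the definitions in Algorithm \ref{alg_coin_toss}. For part (i), I would fix the encrypted input $\encrypted{x}$ (so $x\in\br{0,\ldots,p-1}$ is a fixed value) and condition on the random draw $r\gets[0,m]$. The algorithm outputs $\isSmaller(\encrypted{x}, r')$ where $r'=\ceil{f^{-1}(r)}$, so $b=1$ exactly when $x < r' = \ceil{f^{-1}(r)}$. Since $x$ is an integer, $x < \ceil{f^{-1}(r)}$ iff $x < f^{-1}(r)$ (the ceiling can only push the non-integer $f^{-1}(r)$ up to the next integer, and an integer $x$ strictly below $\ceil{f^{-1}(r)}$ is strictly below $f^{-1}(r)$ as well, while $x \ge f^{-1}(r) \Rightarrow x \ge \ceil{f^{-1}(r)}$ when $x$ is an integer — I'd state this as the equivalence $x < \ceil{f^{-1}(r)} \iff x < f^{-1}(r)$ for integer $x$). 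Because $f$ is increasing and invertible, $x < f^{-1}(r) \iff f(x) < r$. Hence $\Pr[b=1] = \Pr_{r\gets[0,m]}[r > f(x)] = (m - f(x))/m$ if we draw "strictly greater"... here I need to be careful about the orientation: $\isSmaller(\encrypted x, r')=1$ iff $x<r'$, so $b=1 \iff f(x) < r$, giving $\Pr[b=1] = \Pr[r > f(x)]$. Since $r$ is uniform on $[0,m]$ and $f(x)\in[0,m]$, this probability is $(m-f(x))/m = 1 - f(x)/m$.

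This is the point where I'd flag a subtlety: the stated output is $\Pr[b=1]=f(x)/m$, not $1-f(x)/m$. So either the intended draw is "$r \gets [0,m]$ and return $\isSmaller(r', \encrypted{x})$"-style (i.e. $b=1\iff r < f(x)$), matching the techniques-overview paragraph "$\Pr[r<f(x_i)]=f(x_i)/n$", or one reads $\isSmaller(\encrypted{x},r')$ with a convention that makes $b=1\iff r\le f(x)$; in the write-up I would align the argument with the overview so that $b=1 \iff r < f(x)$, yielding $\Pr[b=1] = \Pr_{r\gets[0,m]}[r<f(x)] = f(x)/m$ since $r$ is uniform on an interval of length $m$. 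Modulo this orientation bookkeeping, part (i) is immediate.

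For parts (ii) and (iii), the observation is that $r$ and $r'=\ceil{f^{-1}(r)}$ are plaintext quantities computed without any homomorphic operations (as emphasized in the "$\CoinToss$ as an Arithmetic Circuit" paragraph, we work with the family $\{\CoinToss_r\}_{r\in[0,m]}$, each member of which has $r'$ hard-wired as a constant). Therefore the entire circuit $\CoinToss_r$ consists of a single invocation of the comparison circuit $\isSmaller(\encrypted{x}, r')$ between the ciphertext $\encrypted{x}$ and the plaintext constant $r'$. Consequently $\depth(\CoinToss) = \depth(\isSmaller)$ and $\size(\CoinToss) = \size(\isSmaller)$, both $O(\cdot)$ of the corresponding parameter of $\isSmaller$, which is exactly claims (ii) and (iii). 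No multiplication gates are introduced beyond those inside $\isSmaller$, so the bounds are tight up to constants.

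The main obstacle here is not depth — it is purely the off-by-orientation / ceiling-vs-floor bookkeeping in part (i): making sure that the discretization $r' = \ceil{f^{-1}(r)}$ together with the integrality of $x$ and the monotonicity of $f$ reproduces the probability $f(x)/m$ exactly (and not merely up to an $O(1/m)$ rounding error), and that the direction of the final $\isSmaller$ matches. Everything else is bookkeeping: once the equivalence chain $\{b=1\} \iff \{x < \ceil{f^{-1}(r)}\} \iff \{x < f^{-1}(r)\} \iff \{f(x) < r\}$ (for integer $x$) is established, the probability computation and the circuit-complexity statements follow with essentially no work.
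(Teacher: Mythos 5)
Your proposal follows essentially the same route as the paper's proof: the equivalence chain $f(x)<r \iff x<f^{-1}(r) \iff x<\lceil f^{-1}(r)\rceil$ (using integrality of $x$ and monotonicity of $f$) for part (i), and the observation that each $\CoinToss_r$ is a single hard-wired instance of $\isSmaller$ for parts (ii) and (iii). The orientation subtlety you flag is real and not a gap in your argument --- the paper's own proof asserts $Pr[f(x)<r]=f(x)/m$ for $r$ uniform on $[0,m]$, which is actually $1-f(x)/m$, so your fix (aligning with the overview's test $r<f(x)$, i.e., flipping the direction of the final comparison) is exactly what is needed.
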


\begin{proof}
{\em Correctness.}
Since $f$ is increasing and invertible
$$Pr[f(x) < r] = Pr[x < f^{-1}(r)] = Pr[x < \ceil{f^{-1}(r)}].$$
The last equation is true since since $x$ is integer.

Since we pick $r$ uniformly from $[0,m]$ we get $Pr[f(x) < r] = f(x)/m$.

{\em Depth and Size.}
After choosing $\CoinToss_r$ by randomly picking $r$, that circuit embeds $\isSmaller$ and the bound on the size and depth are immediate.

\end{proof}

The $\isSmaller$ function may be implemented differently, depending on the data representation.
In this paper, we use a polynomial interpolation to compute $\isSmaller$ and therefore,
$\depth(\isSmaller) = O(\log p)$ and $\size(\isSmaller) = O(\sqrt{p})$. We summarize it in the following corollary:

\begin{corollary}
In this paper, $\CoinToss$ is implemented with $\depth(\CoinToss) = O(\log p)$ and $\size(\CoinToss) = O(\sqrt{p})$.
\end{corollary}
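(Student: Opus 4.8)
The plan is to obtain the corollary by substituting the concrete cost of the $\isSmaller$ circuit used in this paper into the abstract bounds of Theorem~\ref{lemma_coin_toss}. That theorem already gives $\depth(\CoinToss) = O(\depth(\isSmaller))$ and $\size(\CoinToss) = O(\size(\isSmaller))$ once a value of $r$ is fixed and the member $\CoinToss_r$ of the circuit family is selected; fixing $r$ merely hard-wires the plaintext constant $r' = \ceil{f^{-1}(r)}$ into the comparison and does not remove gates in the worst case, so these asymptotic bounds hold uniformly over the family. Hence it suffices to bound $\depth(\isSmaller)$ and $\size(\isSmaller)$ for the ``compare a ciphertext to a plaintext'' circuit that appears on line~\ref{line_is_smaller_const_return} of Algorithm~\ref{alg_coin_toss}.

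To do that I would first recall from the Preliminaries that this one-sided comparison is realized as $\isSmaller(\encrypted{x}, y) = \isNeg_{p'}(\encrypted{x} - y)$, a \emph{univariate} polynomial evaluated at the single ciphertext $\encrypted{x}$, where the working modulus $p'$ satisfies $p' > 2p$ and can be chosen with $p' = \Theta(p)$. Then I would invoke the Paterson et al. construction quoted above (\cite{sqrt-mult-73}): every univariate polynomial $\PP_{g,p'}$ over $\ZZ_{p'}$ is computed by an arithmetic circuit of depth $O(\log p')$ and size $O(\sqrt{p'})$. Applying this to $g(x) = \isNeg_{p'}(x - y)$ gives $\depth(\isSmaller) = O(\log p') = O(\log p)$ and $\size(\isSmaller) = O(\sqrt{p'}) = O(\sqrt{p})$, the last equalities using $p' = \Theta(p)$ so that $\log p'$ and $\sqrt{p'}$ are within constant factors of $\log p$ and $\sqrt{p}$.

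Plugging these two estimates into Theorem~\ref{lemma_coin_toss}(ii)--(iii) yields $\depth(\CoinToss) = O(\log p)$ and $\size(\CoinToss) = O(\sqrt{p})$, as claimed. There is no genuinely hard step here; the only points that need care are verifying that the ring enlargement from $p$ to $p'$ is only by a constant factor (so neither the $\log$ nor the $\sqrt{\cdot}$ incurs a blow-up) and noting that the bound of Theorem~\ref{lemma_coin_toss} is stated per fixed $r$, so it transfers verbatim to the instantiated circuit $\CoinToss_r$. The corollary is essentially bookkeeping that pins down, for use in the later analyses of $\ProbAvg$ and of the full protocol, the concrete comparison cost adopted in this paper.
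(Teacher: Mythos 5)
Your proposal is correct and follows essentially the same route as the paper: the paper likewise obtains the corollary by noting that $\isSmaller$ is realized as a univariate polynomial interpolation (via $\isNeg$ over $\ZZ_{p'}$ with $p'=\Theta(p)$), whose Paterson--Stockmeyer circuit has depth $O(\log p)$ and size $O(\sqrt{p})$, and then substituting these into Theorem~\ref{lemma_coin_toss}(ii)--(iii). Your added care about the constant-factor ring enlargement and the per-$r$ instantiation of $\CoinToss_r$ only makes explicit what the paper leaves implicit.
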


%%%%%%%%%%%%%%%%%%%%%%%%%%%%%%%5
%% using Chernof
%%%%%%%%%%%%%%%%%%%%%%%%%%%%%%%

\subsection{Analysis of $\ProbAvg$}
\label{sec_analysis_prob_avg}

We now prove the correctness and depth and size bounds of Algorithm~\ref{alg_avg}.

\begin{theorem}
\label{lemma_chernoff}
Let $p,m \in \NN$, $x_1, \ldots, x_n \in \br{0,\ldots, p-1}$ and $f:[0,p-1] \mapsto [0,m]$ be an increasing and invertible function.
Denote $\chi = 1/m \sum_1^n f(x_i) \mod p$ then:\\
(i) $\ProbAvg$ returns $x^*$ such that $Pr[ \abs{ x^* - \chi } > \delta\chi ] < 2\exp (-\frac{\chi\delta^2}{3})$.\\
(ii) $\depth(\ProbAvg) = O(\depth(\isSmaller))$.\\
(iii) $\size(\ProbAvg) = O(n \cdot \size(\isSmaller))$.
\end{theorem}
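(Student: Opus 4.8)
## Proof Proposal for Theorem~\ref{lemma_chernoff}

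The plan is to prove the three parts in order of difficulty, starting with (ii) and (iii), which are structural, and then tackling (i), which is the probabilistic heart of the statement. For the depth bound (ii), I would observe that $\ProbAvg$ consists of two phases: Line~\ref{line_set_ai} tosses $n$ doubly-blinded coins in parallel (these are independent circuits taking disjoint inputs $\encrypted{x_i}$), and Line~\ref{line_add_ai} sums the $n$ resulting bits. By Theorem~\ref{lemma_coin_toss}(ii), each coin toss has depth $O(\depth(\isSmaller))$; the summation tree adds only $add$-gates, which do not count toward depth. Hence $\depth(\ProbAvg) = O(\depth(\isSmaller))$. For (iii), each of the $n$ coin tosses contributes $O(\size(\isSmaller))$ $mult$-gates by Theorem~\ref{lemma_coin_toss}(iii), and the final summation contributes no $mult$-gates, giving $\size(\ProbAvg) = O(n\cdot\size(\isSmaller))$.

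For part (i), the key observation is that by Theorem~\ref{lemma_coin_toss}(i), the indicator $a_i$ output by the $i$-th doubly-blinded coin toss satisfies $\Pr[a_i = 1] = f(x_i)/m$, and the $a_i$ are mutually independent since each uses a fresh random draw $r_i \leftarrow [0,m]$. Therefore $x^* = \sum_{i=1}^n a_i$ is a sum of independent (non-identical) Bernoulli random variables with $\E[x^*] = \sum_{i=1}^n f(x_i)/m$. Modulo the rounding subtlety, this equals $\chi$; I would state $\E[x^*] \approx \chi$ and account for the $\lceil\cdot\rfloor$ rounding as an additive error absorbed into constants (or note that each $a_i$ effectively has success probability $\lceil f^{-1}(r_i)\rfloor$-consistent with the integrality of $x_i$, as in the proof of Theorem~\ref{lemma_coin_toss}). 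Then I would apply the multiplicative Chernoff bound: for a sum $X$ of independent Bernoullis with mean $\mu_X$,
\[
\Pr[\abs{X - \mu_X} > \delta\mu_X] < 2\exp\!\left(-\frac{\mu_X\delta^2}{3}\right).
\]
Substituting $\mu_X = \chi$ gives exactly the claimed bound $\Pr[\abs{x^* - \chi} > \delta\chi] < 2\exp(-\chi\delta^2/3)$.

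The main obstacle I anticipate is the interplay between the modular reduction ``$\bmod\, p$'' in the definition of $\chi$ and the Chernoff bound, which is naturally a statement about the integer-valued (unreduced) sum. If $\sum_i f(x_i)/m$ is itself less than $p$, then $\chi$ equals the true mean and there is no issue; this is the regime the paper cares about (since $x^*$ must land in $\br{0,\ldots,p-1}$ to be meaningful). I would therefore either add the hypothesis that $\chi < p$ (equivalently that no wraparound occurs) or argue that the algorithm is only invoked in that regime — indeed in the calling contexts $m = n$ or $m = np$ precisely to keep the average small. A secondary, more cosmetic obstacle is reconciling the two slightly different error bounds stated in the excerpt — the algorithm box claims $2e^{-2n\delta^2}$ (an additive Hoeffding-style bound) while the theorem claims $2\exp(-\chi\delta^2/3)$ (a multiplicative Chernoff bound); I would prove the multiplicative version as stated in the theorem, since that is what is used downstream in the main theorem's failure probability, and remark that the additive form follows similarly from Hoeffding's inequality if a $\delta$-additive (rather than $\delta\chi$-multiplicative) guarantee is wanted.
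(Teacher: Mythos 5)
Your proposal follows essentially the same route as the paper's own proof: parts (ii) and (iii) by observing the $n$ coin tosses run in parallel and the final summation uses only $add$-gates, and part (i) by noting the $a_i$ are independent Bernoullis with mean $f(x_i)/m$ so that $E[\sum a_i]=\chi$ and applying the multiplicative Chernoff bound. Your additional remarks on the $\bmod\ p$ wraparound and on the mismatch between the algorithm box's additive Hoeffding-style bound and the theorem's multiplicative bound address genuine loose ends that the paper's proof silently skips over.
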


\begin{proof}
{\em Correctness.}
We start by proving that $\ProbAvg$ return $x^*$ such that $Pr[ \abs{ x^* - \chi } > \delta\chi ] < 2\exp (-\frac{\chi\delta^2}{3})$.
From Theorem~\ref{lemma_coin_toss} we have
\[
a_i =
\begin{cases}
1 & \text{with probability } \frac{f(x_i)}{m}\\
0 & \text{otherwise}.
\end{cases}
\]
Since $a_i$ are independent Bernoulli random variables, it follows that $E(\sum a_i) = \frac{1}{m}\sum f(x_i) = \chi$
and by Chernof we have:
$Pr\left( \sum a_i > (1+\delta)\chi\right) < \exp(-\frac{\chi\delta^2}{3})$ and
$Pr\left( \sum a_i < (1-\delta)\chi\right) < \exp(-\frac{\chi\delta^2}{2})$,
from which it immediately follows that
$Pr\left(\abs{ \sum a_i - \chi } > \delta\chi\right) < 2 \exp(-\frac{\chi\delta^2}{3})$.

{\em Depth and Size.}
We analyze the depth and size of the arithmetic circuit that implements $\ProbAvg$.
Since all coin tosses are done in parallel the multiplicative depth is $\depth(\ProbAvg) = \depth(\CoinToss)$ and the size is
$\size(\ProbAvg) = O(n\cdot\depth(\CoinToss))$.
\end{proof}

%%%%%%%%%%%%%%%%%%%%%%%%%%%%%%%%
%% end of using Chernof
%%%%%%%%%%%%%%%%%%%%%%%%%%%%%%%

\subsection{Analysis of $\KNearestNeighbors$}
\label{sec_analysis_knn}

In this subsection we prove the correctness and bounds of the $\KNearestNeighbors$ protocol.

%\begin{theorem} restate
%Let $k,p,d\in\NN$ and $S = (s_1, \ldots, s_n) \in \ZZ_p^d$, where $class(i)$ is a class associated with $s_i$, also let $q \in \ZZ_p^d$ such that
%$SD(\DD_{S,q}, \Gaussian(\mu, \sigma)) = s$, where $(\mu,\sigma)$ are the average and standard deviation of $\DD_{S, q}$.
%Then:
%\\
%(i) The client's output in $\KNearestNeighbors$ is $class_q$ which is the majority class of $\kappa$ nearest neighbors of $q$ in $S$, where
%$Pr[\abs{k-\kappa} > \delta k] < 2\exp(-O(\delta\frac{k}{\mu+\Phi^{-1}(k/n)\sigma}(\sigma^2+\mu^2) + 2\exp(-O(\frac{\mu\delta^2k^2\sigma}{\mu s + \sigma^2 s}))))$.\\
%(ii) $\depth(\KNearestNeighbors) = O\big(\depth(\computeDist) + \log p + \depth(\isSmaller)\big),$ and\\
%(iii) $\size(\KNearestNeighbors) = O\big(n\cdot \size(\computeDist) + \sqrt{p} + n\cdot \size(\isSmaller)\big),$\\
%where $\depth(\KNearestNeighbors)$ and $\size(\KNearestNeighbors)$ denote the depth and size of the arithmetic circuit evaluated by the server.
%\end{theorem}
\restateTheoremKNN*

\begin{proof}
{\em Correctness.}
For lack of space we give the proof of correctness in Appendix~\ref{sec_proof}.
In a nutshell, the proof follows these steps:
\begin{itemize}
\item Use Theorem~\ref{lemma_chernoff} to prove $\mu^* \approx \mu$ and $\mu_2^* \approx \mu_2$ (with high probability), where
$\mu$ and $\mu_2$ are the first two moments of $\DD_{S,q}$ and $\mu^*$ and $\mu_2^*$ are the approximations calculated using $\ProbAvg$.
\item Prove $\sigma^* \approx \sigma$ (with high probability), where $\sigma = \sqrt{\mu^2 - \mu_2}$ and $\sigma^*$ is the approximation calculated by $\KNearestNeighbors$.
\item Prove $T^* \approx T$ (with high probability), where $T = \mu + \Phi^{-1}(k/n)\sigma$ and $T^* = \mu^* + \Phi^{-1}(k/n)\sigma^*$ as calculated by $\KNearestNeighbors$.
\item Prove $\abs{\sett{x_i}{x_i < T^*}} \approx \abs{\sett{x_i}{x_i < T}}$ (with high probability), where $\DD_{S,q}$ is statistically close to $\Gaussian(\mu, \sigma)$.
\end{itemize}

{\em Depth and Size.}
The protocol consists of \ref{last_step} steps:
\begin{enumerate}
\item \label{step1} Compute distances $x_1, \ldots, x_n$.
\item \label{step2} Compute $\mu^*$ and $\mu_2^*$.
\item \label{step3} Compute $(\mu^*)^2$
\item \label{step4} Compute $\sigma^*$.
\item \label{step5} Compute $T^*$.
\item \label{step6} Compute $C_0$ and $C_1$.
\item \label{step7} Compute $class_q$.
\label{last_step}
\end{enumerate}

Step~\ref{step1} is done by instantiating $n$ $\computeDist$ sub-circuits in parallel;
Step~\ref{step2} is done by instantiating $O(1)$ $\ProbAvg$ sub-circuits in parallel;
Steps~\ref{step3}-\ref{step5} are done by instantiating $O(1)$ polynomials in parallel;
Step~\ref{step6} is done by instantiating $O(n)$ $\isSmaller$ sub-circuits in parallel,
and
Step~\ref{step7} is done by instantiating $O(1)$ polynomials.

Summing it all up we get that
$$\depth(\KNearestNeighbors) = O\big(\depth(\computeDist) + \log p + \depth(\isSmaller)\big),$$ and
$$\size(\KNearestNeighbors) = O\big(n\cdot\size(\computeDist) + \sqrt{p} + n\cdot\size(\isSmaller)\big).$$

\end{proof}

Plugging in our implementations of $\isSmaller$ and $\computeDist$  we get this corollary.

\begin{corollary}
Protocol~\ref{alg_geo_search} can be implemented with
$$\depth(\KNearestNeighbors) = O(\log p),$$ and
$$\size(\KNearestNeighbors) = O(n\cdot\sqrt{p}).$$
\end{corollary}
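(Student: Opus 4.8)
The plan is to specialize the three bounds of the preceding theorem by plugging in the concrete circuits for $\isSmaller$ and $\computeDist$ used in this paper, so that only bookkeeping remains. First I would recall, as already noted in the discussion after Theorem~\ref{lemma_coin_toss} and in Section~\ref{sec_compute_dist}, that realizing $\isSmaller_p(x,y)=\isNeg_{p'}(x-y)$ through the polynomial interpolation $\PP_{\isNeg(\cdot)}$ with $p'=\Theta(p)$ and applying the construction of Paterson et al.~\cite{sqrt-mult-73} gives $\depth(\isSmaller)=O(\log p)$ and $\size(\isSmaller)=O(\sqrt{p})$.

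Next I would bound $\computeDist=\dist_{\ell_1}$. The identity $\dist_{\ell_1}(a,b)=\sum_{i=1}^d \big(1-2\isSmaller(a_i,b_i)\big)(a_i-b_i)$ instantiates $d$ independent copies of $\isSmaller$ in parallel, each followed by a single $mult$-gate (multiplying the $\pm 1$ indicator by $a_i-b_i$), after which the $d$ resulting products are summed using $add$-gates only. Since $add$-gates count toward neither $\size$ nor $\depth$ in our accounting, this gives $\depth(\computeDist)=\depth(\isSmaller)+O(1)=O(\log p)$ and $\size(\computeDist)=d\cdot\size(\isSmaller)+O(d)=O(\sqrt{p})$, where the dimension $d$ is treated as a constant.

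Finally I would substitute these estimates into parts (ii) and (iii) of the theorem, obtaining $\depth(\KNearestNeighbors)=O\big(\depth(\computeDist)+\log p+\depth(\isSmaller)\big)=O(\log p)$ and $\size(\KNearestNeighbors)=O\big(n\cdot\size(\computeDist)+\sqrt{p}+n\cdot\size(\isSmaller)\big)=O(n\sqrt{p})$, where the stray additive $\sqrt{p}$ arising from the constant number of $\PP_{\sqrt{\cdot}}$ and $\PP_{(\cdot)^2/p}$ interpolations in the $\sigma^*$ computation is absorbed into the $n\sqrt{p}$ term. The only point requiring a word of care — and the closest thing to an obstacle — is the role of $d$: the $O(d)$ blow-up from $\dist_{\ell_1}$ must be read as a constant factor (i.e. the corollary is stated for fixed dimension), since otherwise the size bound would carry an explicit factor of $d$.
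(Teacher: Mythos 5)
Your proposal is correct and matches the paper's argument: the paper likewise obtains the corollary by plugging the interpolation-based bounds $\depth(\isSmaller)=O(\log p)$, $\size(\isSmaller)=O(\sqrt{p})$ and the $d$-fold parallel $\isSmaller$ realization of $\dist_{\ell_1}$ (with $d$ fixed) into parts (ii) and (iii) of the theorem. Your extra remark about the dimension $d$ being absorbed as a constant is a fair clarification of an assumption the paper leaves implicit.
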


\section{Security Analysis}
The $k$-ish Nearest Neighbors protocol involves two parties, called client and server.
The client and the server have shared parameters: a security parameter $\lambda$, a HE scheme $\E$ working over a ring of size $p$ and the dimension $d$.
In addition the client has a query given as a vector $q \in \ZZ_p^d$.
The server has additional input: two integers $k<n$, a database of vectors $S = (s_1, \ldots, s_n)$, where $s_i \in \ZZ_p^d$ and their classifications $class \in \br{0,1}^n$.
the user’s output is the class of $q$, $class_q$ determined by the majority class of $\kappa \approx k$ nearest neighbors.
The server has no output.

\begin{theorem}
The secure $k$-ish NN classifier protocol (Protocol~\ref{alg_geo_search}) securely realize the $k$-ish NN
functionality (as defined above) against a semi-honest adversary controlling the server and against
a semi-honest adversary controlling the client,  assuming the
underlying encryption $\E$ is semantically secure.
\end{theorem}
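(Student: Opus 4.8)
The plan is to use the standard real/ideal simulation paradigm for two-party protocols secure against semi-honest corruptions: for each party we exhibit a PPT simulator $\Sim$ that, given that party's input together with its prescribed output, outputs a transcript computationally indistinguishable from that party's view in a real run of Protocol~\ref{alg_geo_search}, jointly with the honest party's output. We take the ideal functionality $\mathcal F$ to be the $k$-ish NN map computed by Protocol~\ref{alg_geo_search}; as is standard for randomized functionalities, we let $\mathcal F$ and the protocol share a single random tape — the server's coin-toss outcomes — so that the client's output $class_q$ is the same deterministic function of $(q,S,class,k,n)$ and that tape in both worlds. Since the protocol carries a single message in each direction, no cross-round hybrid is needed and the whole argument reduces to the semantic security of $\E$ plus a re-randomization property discussed below.

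\emph{Corrupted server.} The server's view is $\view_{\text{server}}=(S,class,k,n;\ \rho_{\text{server}};\ pk,\enc{q})$, whose only ingredient not already determined by the server is the incoming pair $(pk,\enc{q})$ (the server has no output, and $\enc{class_q}$ is a deterministic function of the view). The simulator runs $(pk,sk)\gets\Gen(1^\lambda,p)$, samples $\rho_{\text{server}}$ with the correct distribution, sets $\enc{\tilde q}\gets\Enc_{pk}(\bzero)$, and outputs $(S,class,k,n;\ \rho_{\text{server}};\ pk,\enc{\tilde q})$. Indistinguishability is a one-line reduction: a distinguisher separating the joint real and simulated distributions yields a distinguisher between $\Enc_{pk}(q)$ and $\Enc_{pk}(\bzero)$, contradicting semantic security; the honest client's output is computed by $\mathcal F$ from $q$ and the shared coins $\rho_{\text{server}}$ in both worlds, so it contributes no extra correlation.

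\emph{Corrupted client.} The client's view is $\view_{\text{client}}=(q;\ \rho_{\text{client}};\ \enc{class_q})$, where $\rho_{\text{client}}$ determines $(sk,pk)$ via $\Gen$ and $\enc{class_q}$ is the server's reply; the client's output is $\Dec_{sk}(\enc{class_q})=class_q$ by correctness of $\E$. The simulator, on input $(q,class_q)$, samples $\rho_{\text{client}}$, derives $(sk,pk)$ from it exactly as $\Gen$ would, forms $\enc{\widetilde{class_q}}$ as a fresh (re-randomized) encryption of $class_q$ under $pk$, and outputs $(q;\ \rho_{\text{client}};\ \enc{\widetilde{class_q}})$. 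The two ciphertexts decrypt to the same value, so it remains only to argue that a homomorphically evaluated ciphertext is indistinguishable from a fresh encryption of the same plaintext.

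\emph{Main obstacle.} That last point is the crux and the only step that is not an immediate reduction: the client holds $sk$, so semantic security alone does not preclude the evaluated ciphertext $\enc{class_q}=\isSmaller(\encrypted{C_0},\encrypted{C_1})$ from leaking information about the evaluated circuit — hence about $S$ — beyond $class_q$. The fix I would invoke is that $\Eval$ is circuit-private (function-private), or equivalently that the server applies a noise-flooding / re-randomization step to $\enc{class_q}$ before transmission, so that the transmitted ciphertext's distribution is statistically (or computationally) close, given $pk$, to that of a fresh encryption of $class_q$; this is exactly the meaning of the threat-model caveat that the parties' views are as claimed ``up to negligible information that can feasibly be extracted from HE-encrypted messages.'' Granting this property, the real and simulated client views are indistinguishable. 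Finally, since every server computation is performed homomorphically and the client receives nothing but $\enc{class_q}$, the protocol leaks nothing beyond $\mathcal F$, so the two one-sided simulators together establish the theorem.
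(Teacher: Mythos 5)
Your proposal is correct and, on the server side, coincides with the paper's own argument: the paper's simulator replaces the client's query with a dummy $q'$ and reduces indistinguishability of $(pk,\enc{q'})$ from $(pk,\enc{q})$ to IND-CPA security of $\E$, exactly as your reduction to $\Enc_{pk}(\bzero)$ does. Where you genuinely diverge is in scope: the paper's proof constructs a simulator \emph{only} for a corrupted server, even though the theorem claims security against a corrupted client as well; the client side is dispatched earlier, in the threat-model paragraph, with the informal remark that the client's view is its input and output ``up to negligible information that can feasibly be extracted from HE-encrypted messages.'' You make that hidden step explicit and identify its actual content — that the evaluated ciphertext $\enc{class_q}$, seen by a party holding $sk$, must be indistinguishable from a fresh encryption of $class_q$, which requires circuit privacy (or noise flooding/re-randomization of the reply), a property \emph{not} implied by semantic security alone and not listed among the theorem's hypotheses. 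You also handle the bookkeeping for the randomized functionality (sharing the coin-toss tape between the ideal functionality and the protocol), which the paper skips. So your write-up is strictly more complete; the paper's buys brevity at the cost of leaving the client-side claim resting on an unstated assumption about $\E$. One small point in the paper's favor of precision over yours: you could note, as the paper implicitly does, that the server's ``view'' need not include $\enc{class_q}$ as a received message since the server computes it itself, though this does not affect either argument.
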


\begin{proof}
To prove the protocol is secure against a semi-honest adversary
we construct a simulator $\Sim$ whose output, when given only the server’s input and output $(1^\lambda, \E, p, d, k, n, S, class)$,
is computationally indistinguishable from an adversarial server’s view in the protocol.

The simulator operates as follows:
(i) Generates a dummy query $q'$;
(ii) Executes the $k$-ish NN classifier protocol on simulated client’s input $q'$ (the simulator plays the roles of both parties);
(iii) Outputs the sequence of messages received by the simulated server in the $k$-ish NN classifier protocol. The simulator’s output
$\Sim(\ldots) = \Sim(1^\lambda, \E, p, d, k, n, S)$ is therefore:
$$ \Sim(\ldots) = ( pk', \enc{q'}_{pk'}, \enc{class_q'}_{pk'} ),$$
where $pk'$ was generated by $\Gen(1^\lambda, p)$ and $\enc{q'}_{pk'}$ and $\enc{class_q'}_{pk'}$ were generated by $\Enc(\ldots)$.

We show that the simulator’s output is computationally indistinguishable from the view of the server
(assuming $\E$ is semantically secure).
The view of the server consists of its received messages:
$$\view(\A) = (pk, \enc{q}_{pk}, \enc{class_q}_{pk}),$$
where $pk$ was generated by $\Gen(1^\lambda, p)$ and $\enc{q}_{pk}$ and $\enc{\class_q}_{pk}$ were generated by $\Enc(\ldots)$.

Observe that the simulator’s output and the server view are identically distributed, as they are sampled
from the same distribution. Furthermore, the server view is computationally indistinguishable from the real
view by the multi-messages IND-CPA security for the HE scheme $\E$. Put together, we conclude that the
simulator’s output is computationally indistinguishable from the server’s view $\Sim(\ldots) \equiv_c \view(\A)$.

\end{proof}

\section{Experimental Results}
\label{sec_experiment}

We implemented Algorithm~\ref{alg_geo_search} in this paper and built a system that securely classifies breast  tumors.
We give the details in this section.

% what we tested
Our system has two parties. One party (the server)  holds a large database of breast tumors classified as malignant or benign.
The other party (the client) has a query tumor that it wishes to securely classify using the database at the server.
The client learns only the class of its tumor: benign or malignant. The server does not learn anything on the query.

We measured the time to compute the classification and the accuracy of our classifier.
In machine learning literature this is sometimes referred to as performance. We use the term accuracy since we also measure time performance.
The accuracy is expressed in terms of $F_1$ score which quantifies
the overlap between the malignant tumors and the tumor classified as malignant.

We implemented our system using HElib\cite{HELib} for HE operations and ran the server part on a standard server.
Since HElib works over an integer ring we scaled and rounded the data and the query to an integer grid.
As shown below, increasing the size of the grid improved the accuracy of the classifier but also increased the time it took to compute the classification.
As a baseline we show the accuracy of an insecure $k$NN classifier working with floating points

\subsection{The Data}
We tested our classifier with a database of breast tumors~\cite{cancerData}.
This database contains 569 tumor samples, out of which 357 were classified as benign and 212 as malignant.
Each tumor is characterized by 30 features, where each feature is a real number. Among others, these features include the
diameter of the tumor, the length of the circumference and the color it showed in an ultra-sound scan. The full list of parameters can
be found in~\cite{cancerData}.
An insecure $k$NN classifier was already suggested to classify new tumors based on this database (see for example ~\cite{kaggle}).

Although not required by our protocol we applied a preprocessing step to reduce the dimensionality of the problem.
We applied linear discriminant analysis (LDA) and projected the database onto a 2D plane.
This is a preprocessing step the server can apply on the database in clear-text before answering client queries.
Also, (even when using HE) a 30-dimensional encrypted query can easily be projected onto a lower dimensional space.
The projected database is shown in Figure~\ref{fig_projected_data}.

Since HElib encodes integer numbers from a range $\br{0,\ldots,p-1}$ for some $p$ that is determined during key generation
we scaled the data points and rounded them onto the two dimensional grid $p\times p$.
The choice of $p$ affects the accuracy as well as the running time of our algorithm.
We tested our algorithm with various grid sizes. As a baseline for accuracy we used a non-secure $k$NN classifier that works with standard floating point numbers.

%WHAT we did
\subsection{The Experiment}
\paragraph{Accuracy.}
To test the accuracy of Protocol~\ref{alg_geo_search} we used it to classify each point in the database and computed the $F_1$ score on the set of malignant tumors and the set of tumors classified as malignant.
To test a single point we removed it from the database and applied our algorithm on the smaller database with a query that was the removed point.
For a database of 568 points we used $k=13$ which sets $\Phi^{-1}(13/568) \approx 2$.
Since $\KNearestNeighbors$ is non-deterministic it considered the classes of the $\kappa$ nearest neighbors, where with high probability $k/2 < \kappa < 3k/2$.
For the classification of each point we ran 5 instances of $\KNearestNeighbors$ and took the majority.
We calculated the $F_1$ score by repeating this for each of the 569 points.
We then calculated the $F_1$ score for grids between $20\times 20$ and $300\times 300$. The results are summarized in Figure~\ref{fig_vs_gridsize}.

\paragraph{Time.}
The time to complete the $\KNearestNeighbors$ protocol comprises of 3 parts:
\begin{itemize}
\item {\bf Client Time} is the time it takes the client to process its steps of the protocol. In our case, that is (i) generating a key, (ii) encrypting a query and (iii) decrypting
the reply.
\item {\bf Communication Time} is the total time it takes to transmit messages from between the client and the server.
\item {\bf Server Time} is the time it takes the server to evaluate the arithmetic circuit of the protocol.
\end{itemize}

In our experiments,
we measured the server time, i.e. the time it took the server to evaluate the gates of the arithmetic circuit.
The time we measured was the time passed between the receiving of the encrypted query and the sending of the encrypted class of the query.
In some HE schemes the time to evaluate a single gate in a circuit depends on the depth of the entire circuit. Specifically, in the scheme we used (see below)
the overhead to compute a single gate is $\tilde{O}\big(\depth(AC)^3\big)$, where $\depth(AC)$ is the depth of the circuit.

We measured how the size of the grid affects the running time.
To do that we fixed the size of the database to 569 points and repeated the measurement for various grid sizes.
The results are summarized in Figure~\ref{fig_vs_gridsize} (left).

In addition, we measured how the size of the database affects the running time.
To do that we fixed the grid size and artificially duplicated the points of the database. In this test we were not concerned if the classification of the query
was correct or not. Since the protocol and the algorithm is oblivious to the value of the points in the database this is a good estimation to the running time of
the server for other databases as well.

\subsection{The System}
We implemented the algorithms in this paper in C++. We used HElib library~\cite{HELib} for an implementation for HE, including its
usage of SIMD (Single Instruction Multiple Data) technique,
and the LIPHE library~\cite{liphe}.
The source of our system is open under the MIT license.  %and can be found in~\cite{liphe}.

The hardware in our tests was a single off-the-shelf server with 
16 2.2 GHz Intel Xeon E5-2630 cores.
These cores are common in standard laptops and servers.
The server also had 62GB RAM, although our code used much less than that.

All the experiments we made use a security key of 80 bits. This settings is standard and can be easily changed by the client.

%\paragraph{Experimental Results}
%We now describe the experiments we ran 
%on a system implementing Algorithm~\ref{alg_geo_search} and their results.

\subsection{Results and Discussion.}
Our results are summarized in Figures~\ref{fig_vs_gridsize} and ~\ref{fig_vs_size}.

\paragraph{$k$-ish NN vs. $k$NN.}
In Figure\ref{fig_kish_vs_knn} we show how the choice of $k$ changes the accuracy of $k$NN.
The graph shows the $F_1$ score ($y$ axis) of running $k$NN on the data with different values of $k$ ($x$ axis).
The graph shows that for $5\le k \le 20$ the decrease in $F_1$ score is miniscule, from 0.985 to 0.981.
For $20 \le k \le 375$ the $F_1$ score decreases almost linearly from 0.981 to 0.905.
For larger values, $375 < k$ the $F_1$ score drops rapidly because the $k$NN classifier considers too many neighbors.
In the extreme case, for $k = 568$ the classifier considers all data points as neighbors thus classifying all queries as benign.

The graph supports the relaxation we made to consider the $k$-ish NN.

\paragraph{Grid Size.}
The effects of the grid size are shown in Figure~\ref{fig_vs_gridsize}.

In Figure~\ref{fig_vs_gridsize} (right) we show how the accuracy, measured by $F_1$ score, changes with the grid size.
As a baseline, we compared the $F_1$ score of our protocol (shown in a solid blue line) to the $F_1$ score of a $k$NN classifier (shown in dashed red line)
also run on data rounded to the same grid size.
The $F_1$ score of the $k$NN classifier was around 0.98 and did not change much for grids of size $100\times 100$ to $200\times 200$.
The $F_1$ score of our protocol, on the other hand, increased with grid size.
For grid sizes $250\times 250$ and finer, our protocol nearly matched the accuracy of the $k$NN classifier.

For grids coarser than $250\times 250$ the performance of our classifier grew with the grid size.
For coarse grids our protocol performed worse. For example, for a $150\times 150$ grid our protocol's $F_1$ score was 0.5.
For grids between $150\times 150$ and $250\times 250$ the accuracy of our protocol improved.
Both the $k$NN and the $k$-ish NN were computed in the same grid size. The reason for the difference in their accuracy comes from the
success probability of $\ProbAvg$ (Algorithm~\ref{alg_avg}): 
$Pr\left( \abs{\mu^* - \mu} > \delta\mu\right) < 2exp(-\frac{\mu\delta^2}{3}.)$
This probability improves as the average distance $\mu$ grows, which happens when the ring size increases.
The success probability of $\ProbAvg$ then affects the success probability of $\KNearestNeighbors$ (Algorithm~\ref{alg_geo_search}).

In Figure~\ref{fig_vs_gridsize} (right) we show how the server time changed as the grid size changed.
A formula for the server time is
$Time = \tilde{O}\Big(\size(AC)\cdot \big(\depth(AC)^3\big)\Big)$, where $AC$ is the arithmetic circuit the server evaluates. Our analysis shows that for a grid $g\times g$, we have $p=O(g)$ for fixed $d$, which means
$$Time=\tilde{O}(log^3 g \sqrt{g})$$.

In Figure~\ref{fig_vs_gridsize} (left) we show how the memory requirements of evaluating the arithmetic circuit of $\KNearestNeighbors$ grows as the grid size grows.
The increase in the memory requirements comes from the polynomial interpolations we used.
Our implementation of realizing those polynomials used $O(\sqrt{g})$ memory for a $g\times g$ grid.

\paragraph{Database Size.}
The server time of our protocol changes linearly with, $n$, the size of the database, $S = (s_1, \ldots, s_n)$.
See Figure~\ref{fig_vs_size}.
This can be easily explained since the depth of the arithmetic circuit does not depend on $n$ and the number of gates needed to compute $\mu, \mu_2, C_0, C_1$ and the distances
$x_1,\ldots,x_n$ grow linearly with $n$.

\paragraph{Scaling}
Our protocol scales almost linearly with the number of cores since computing $\mu, \mu_2, C_0, C_1$ and the distances $x_1, \ldots, x_n$ are embarrassingly parallel.

\begin{figure}[!ht]
\centering
\includegraphics[width=100mm]{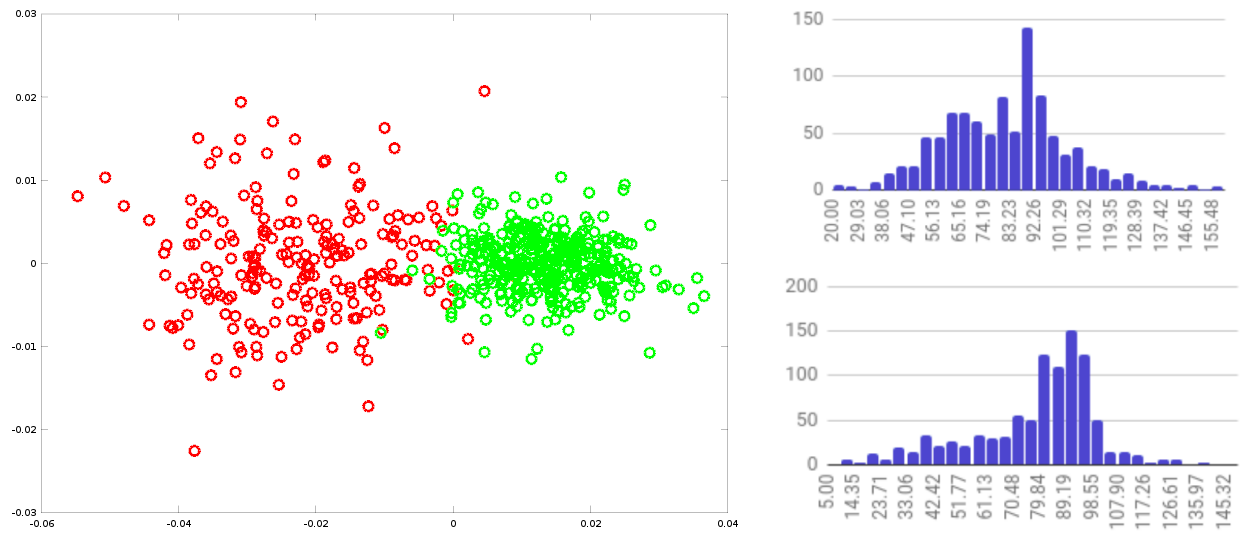}
\caption{
On the right, the  569 points in the database, representing breast tumors samples classified as benign (green) and malignant (red) after applying LDA and projecting them onto the plane.
On the left,
two histograms of distances from two random points on the plane to the 2D points in the database scaled and rounded to a $100\times 100$ grid.
}
\label{fig_projected_data}
\end{figure}

%\begin{figure}[!ht]
%\centering
%\includegraphics[width=150mm]{accuracy_vs_time.png}
%\caption{
%Time to compute $k$-ish NN as a function of accuracy.
%Each point is a run of the secure $k$-ish NN with different grid sizes.
%The $x$ axis shows the $F_1$ score measuring how well the secure $k$-ish NN classifier performed.
%The $y$ axis is the time to make the classification.
%}
%\label{fig_accuracy_vs_time}
%\end{figure}

\begin{figure}[!ht]
\centering
\includegraphics[width=130mm]{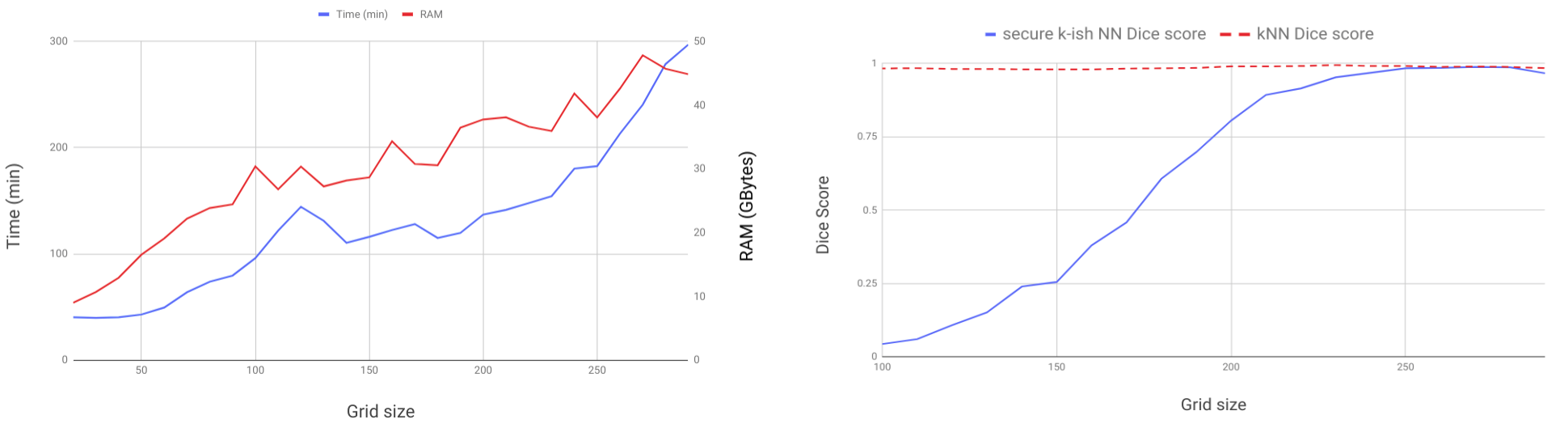}
\caption{
On the left, how the time and RAM needed to compute the $k$-ish NN grows as a function of the grid-size to which data was rounded to.
The $x$ axis is the number of grid cells in each dimension (e.g. $x=100$ means a $100\times 100$ grid).
The time (in minutes) to compute the $k$-ish NN is shown in the blue graph and it corresponds to the left vertical axis.
The RAM (in Giga Bytes) needed to compute the $k$-ish NN is shown in the red graph and it corresponds to the right vertical axis.
On the right we show in blue the $F_1$ score ($y$-axis) as a function of the grid size.
In red we show the $F_1$ score of a $k$NN ran on the database in plaintext in floating point arithmetics.
The $x$ axis is the number of grid cells in each dimension, as in the left figure.
}
\label{fig_vs_gridsize}
\end{figure}

\begin{figure}[!ht]
\centering
\includegraphics[width=130mm]{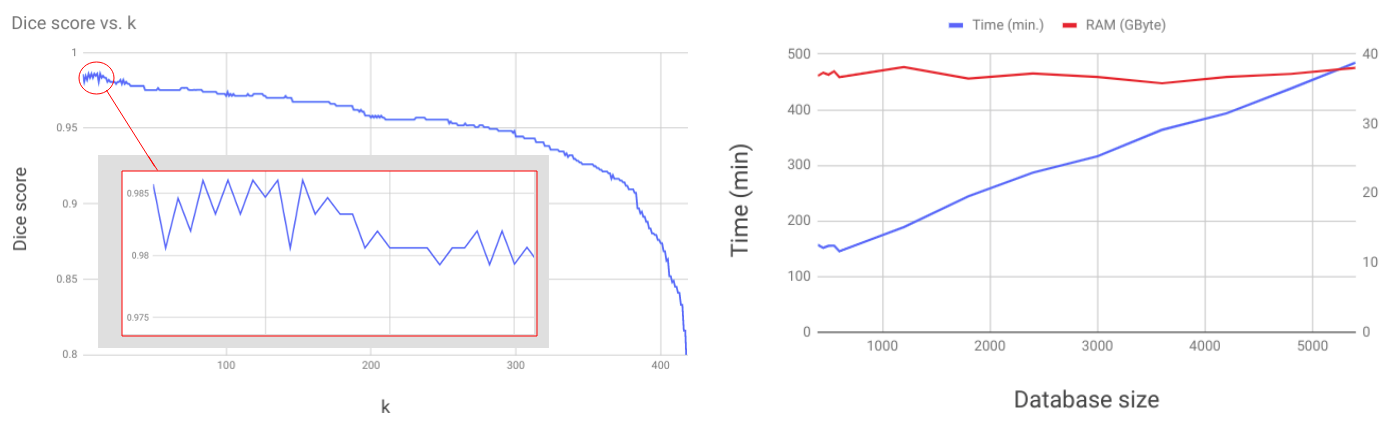}
\caption{
The left figure shows how the $F_1$ score changes as the value of $k$ changes. The $F_1$ score was calculated on a database of 569 tumors in plaintext in floating point arithmetics.
We also zoom into the range $10<x<40$, which we show under the graph.
The right figure shows how the server time (in blue) grows linearly with $n$, and how the RAM requirements (in red) do not grow as $n$ grows.
The server time correspond to the left vertical axis and the RAM requirements correspond to the right vertical axis.
}
\label{fig_kish_vs_knn}
\label{fig_vs_size}
\end{figure}

\subsection{The Naive Implementation}
In this subsection we describe the naive $k$NN implementation and its running time.
For a query $q$ and a database $S=(s_1, \ldots, s_n)$ with their classes $class(1), \ldots, class(n)$ the naive implementation follows these 3 steps:
\begin{itemize}
\item Compute distances $x_i = \computeDist(s_i, q)$ for $i=1,2\ldots,n$.
\item Sort $x_1,\ldots, x_n$ in increasing order $x_{i_1}, \ldots, x_{i_n}$.
\item Take the majority of $class(i_1), \ldots, class(i_k)$.
\end{itemize}
For the second step (sorting), we considered the work of \k{C}etin et al. \cite{Cetin15sort2,Cetin15sort1} who compared several sorting algorithms running with HE.
The fastest algorithm found by their papers was to compute $n$ polynomials $\PP_i(x_1, \ldots, x_n)$, such that $\PP_i(\ldots)$ evaluates to the $i$-th smallest value.
In their papers they gave an explicit description of $\PP_i(\ldots)$.
In our case we consider only the $k$ nearest neighbors and therefore compute only $\PP_1, \ldots, \PP_k$.

The time to evaluate $\PP_i(x_1, \ldots, x_n)$ was too high for our parameters, $n=569$.
With smaller parameters, for example, $n=10$ evaluating $\PP_1(x_1, \ldots, x_{10}), \ldots, \PP_5(x_1, \ldots, x_{10})$ took 18 hours.
Extrapolating to $n=569$ gives a running times of several weeks, which we were not able to run.

\section{Conclusions}

In this paper we present a new classifier which we call $k$-ish NN which is based on $k$NN.
In our new classifier we relax the number of neighbors used to determine the classification of a query to be approximately $k$.
We show that when the distribution of distances from a query, $q$, to the points in the database, $S$, is statistically close to Gaussian
the $k$-ish NN can be implemented as an arithmetic circuit with low depth. Specifically the depth is independent of the database size.
The depth of a circuit has a large impact when it is evaluated with HE operations. In that case, the low depth of our circuit significantly
improves the time performance.
This suggests that our solution is a good candidate to implement a secure classifier.

We give a protocol that, using the $k$-ish NN classifier, classifies a query,  $q$, based on a database $S$.
Our protocol involves two parties. One party (the client) owns $q$, while the other party (the server) owns $S$.
Our protocol involves a single round, in which the client encrypts $q$, sends it to the server, who evaluates an arithmetic circuit
whose output is the (encrypted) classification of $q$. The classification is then sent to the client, who can decrypt it.

The communication complexity of our protocol is proportional to the size of the input and output of the client and is independent of the size of $S$, unlike
previous work that are not based on HE and whose communication complexity depended on the running time which is a function of the size of $S$.
The arithmetic circuit the server evaluates in our protocol is independent of the size of $S$, unlike naive implementations with HE.
This makes our protocol the best of both worlds - having low communication complexity and low running time.

The efficiency of our protocol comes from the relaxation of the number of neighbors used in the classification of $q$.
Relaxing the number of neighbors we are able to use a non-deterministic approach to find approximately $k$ neighbors.
In the context of arithmetic circuits that means evaluating a random circuit that was drawn from a set of circuits.
In this context we have shown how to construct a doubly blinded coin toss where a coin is tossed with probability $\frac{f(\encrypted{x})}{m}$
with $\encrypted{x}$ a ciphertext, $f$ an increasing invertible function and $m > f(\encrypted{x})$.
We used doubly blinded coin tosses to efficiently approximate sums of the form $\frac{1}{m}\sum_1^n f(x_i)$ without generating large intermediate values.
Not using large intermediate values we were able to evaluate the arithmetic circuit over a small ring, which made our polynomial interpolations more efficient.

We implemented a client-server system that uses our classifier to classify real breast tumor data.
The system used HELib as a HE implementation and ran on a standard server.
Our experiments showed the $F_1$ score of our classifier was close to the $F_1$ score of plaintext $k$NN, while being significantly faster than
naive HE implementations of $k$NN.

In future work we will use our doubly blinded coin toss construction to efficiently solve more machine learning and other problems.
We believe, that a non-deterministic approach can be used to efficiently solve other problems as well, and may be of independent interest to the community.
We will also improve our $k$-ish NN to correctly find $k$-ish nearest neighbors when the distribution of distances is not statistically close to Gaussian.

%We presented a new classifier, $k$-ish NN, which is similar to $k$NN.  Our new classifier is more HE friendly and is significantly faster to implement with HE.
%However, it assumes the distribution of distances between the query and the database is statistically close to Gaussian.
%We tested our classifier on real breast cancer dataset, for which this assumption held for most queries we made. We showed that for this database
%the accuracy of our classifier is similar to that of $k$NN.
%
%The protocol we described is a single round protocol between a client and a server, where the communication complexity is proportional to the size of the input
%and output of the client and therefore not dependent on the database size.
%This is more efficient than some previous work whose communication complexity is proportional to the running time, which is dependent on the database size.
%
%The efficiency of our protocol comes from a relaxation we made on the 
%The depth of the arithmetic circuit the server evaluates in our protocol is independent of the 
%parallel
%
%
%To implement our classifier
%we showed how to compute a coin toss whose probability parameter is encrypted.
%We used that to
%show how to approximate the sum $1/m \sum_{i=1}^n f(x_i)$ under HE with a low degree polynomial.
%We believe these primitives will prove to be useful in other applications as well.
%
%
%In future work, we intend to improve our classifier. Specifically, we intend to drop the Gaussian distribution assumption.

% ensure same length columns on last page (might need two sub-sequent latex runs)
\balance

\bibliographystyle{abbrv}
\bibliography{knn2}  % vldb_sample.bib is the name of the Bibliography in this case

\appendix

\section{Proof of Correctness of $\KNearestNeighbors$}
\label{sec_proof}

In this section we prove the correctness of Protocol~\ref{alg_geo_search}.
We first prove that $\sigma^* \approx \sigma$ with high probability, where $\sigma^2 = Var(\DD_{S,q})$ and $\sigma^*$ is the approximation as computed by Protocol~\ref{alg_geo_search}.
Then we prove that $T^* \approx T$, where $T = \mu + \Phi^{-1}(k/n)\sigma$ and $T^*$ is the approximation as computed by Protocol~\ref{alg_geo_search}.
Finally, we prove that $\abs{\sett{s_i}{\dist(s_i,q) < T^*}} = \kappa \approx k$ with probability that depends on $SD(\DD_{S,q}, \Gaussian(\mu,\sigma))$.

\paragraph{Proving $\sigma^*\approx\sigma$.}

Let $\mu = \frac{1}{n}\sum x_i$ and $\mu_2 = \frac{1}{n}\sum x_i^2$ be the first two moments of $\DD_{S,q}$,
and denote $\sigma^2 = Var(\DD_{S,q}) = \mu_2 - \mu^2$.
Also denote by $\mu^* < p$ and $\mu_2^* < p^2$ the approximations of $\mu$ and $\mu_2$ as computed by Protocol~\ref{alg_geo_search} using $\ProbAvg$,
Denote $\sigma^\dagger = \sqrt{(\mu^*)^2 - \mu_2^*}$
and $\sigma^*$ be as computed by Protocol~\ref{alg_geo_search} from the digit-couples representation of $(\mu^*)^2$ and $\mu_2$,
\[
\sigma^* =
\begin{cases}
\sqrt{\big(\high((\mu^*)^2) - \high(\mu^*_2)\big)p} & \text{if } \high((\mu^*)^2) - \high(\mu_2^*) \ge 2,\\
\sqrt{p + \big(\low((\mu^*)^2) - \low(\mu^*_2)\big)} & \text{if } \high((\mu^*)^2) - \high(\mu^*_2) = 1,\\
\sqrt{\low((\mu^*)^2) - \low(\mu^*_2)} & \text{otherwise.}
\end{cases}
\]
Then,

\begin{lemma}
\label{lemma_sigma_dagger}
$$\sigma^\dagger\frac{1}{\sqrt{2}} \le \sigma^* \le \sigma^\dagger\frac{3}{\sqrt{2}}$$
\end{lemma}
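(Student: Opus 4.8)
The plan is a short case analysis on the ``carry digit'' $h := \high\big((\mu^*)^2\big) - \high(\mu_2^*)$. Write $A := (\mu^*)^2$ and $B := \mu_2^*$, both in $\{0,\ldots,p^2-1\}$, and set $\Delta := A - B$, which we may take to be nonnegative since $\sigma^\dagger = \sqrt{\Delta}$. From the base-$p$ decompositions $A = \high(A)\,p + \low(A)$ and $B = \high(B)\,p + \low(B)$ with digits in $\{0,\ldots,p-1\}$ one gets the exact identity $\Delta = h\,p + \ell$ where $\ell := \low(A) - \low(B)$ satisfies $-p < \ell < p$; in particular $\Delta \ge 0$ forces $h \ge 0$, so the three branches defining $\sigma^*$ (namely $h = 0$, $h = 1$, $h \ge 2$) are exhaustive, and the whole proof amounts to comparing the branch-selected $\sigma^*$ with $\sigma^\dagger = \sqrt{\Delta}$ in each regime.

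First I would dispose of the two exact branches. If $h = 0$ then $\Delta = \ell \ge 0$ and Protocol~\ref{alg_geo_search} outputs $\sigma^* = \sqrt{\low(A) - \low(B)} = \sqrt{\ell} = \sigma^\dagger$. If $h = 1$ then $\Delta = p + \ell \ge p - (p-1) = 1 > 0$ and the protocol outputs $\sigma^* = \sqrt{p + \low(A) - \low(B)} = \sqrt{p + \ell} = \sigma^\dagger$. In both cases $\sigma^* = \sigma^\dagger$ exactly, so the two-sided bound is trivial; informally, when at most one ``carry'' is involved the digit-couples representation loses no information.

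The only genuine approximation is the branch $h \ge 2$, where the protocol discards the low digits and outputs $\sigma^* = \sqrt{h p}$. Here I would combine $\Delta = h p + \ell$ with $|\ell| < p$ to sandwich $(h-1)p < \Delta < (h+1)p$; since $h \ge 2$ this gives $\Delta > p > 0$, so $\sigma^\dagger$ is well-defined, and dividing yields $\sqrt{h/(h+1)} < \sigma^*/\sigma^\dagger < \sqrt{h/(h-1)}$. Since $h \mapsto h/(h+1)$ is increasing and $h \mapsto h/(h-1)$ is decreasing, for every $h \ge 2$ we obtain $\sqrt{2/3} < \sigma^*/\sigma^\dagger < \sqrt{2}$, and both endpoints lie strictly inside $[\tfrac{1}{\sqrt{2}}, \tfrac{3}{\sqrt{2}}]$, which proves the lemma (in fact with room to spare).

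I do not expect a real obstacle here; once the identity $\Delta = hp + \ell$ is written down everything is elementary. The points that need care are the sign of $\ell$ --- it may be negative, so one must check that $p + \ell \ge 1$ in the $h = 1$ branch, and that the lower sandwich bound $(h-1)p$ is strictly positive in the $h \ge 2$ branch so that dividing by $\sigma^\dagger$ is legitimate --- and the (easy) observation that $h \ge 0$ always holds, so the three listed cases really are exhaustive.
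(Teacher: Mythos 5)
Your proof is correct and follows the same overall plan as the paper's: the two branches $h=0$ and $h=1$ are exact ($\sigma^*=\sigma^\dagger$), and only the $h\ge 2$ branch needs an estimate. The one place you genuinely diverge is in that third branch, and your version is actually the sounder one. The paper sandwiches $(\sigma^*)^2$ between $(\sigma^\dagger)^2-p$ and $(\sigma^\dagger)^2+p$ and then bounds the ratio by $\tfrac12$ and $\tfrac32$, which silently requires $(\sigma^\dagger)^2\ge 2p$; but $h\ge 2$ only guarantees $\Delta=hp+\ell\ge p+1$ (e.g.\ $h=2$, $\ell=-(p-1)$ gives $(\sigma^*)^2/(\sigma^\dagger)^2=2p/(p+1)\approx 2>\tfrac32$), so the paper's displayed chain of inequalities can fail even though its final conclusion survives because $\sqrt{2}<3/\sqrt{2}$. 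Your sandwich $(h-1)p<\Delta<(h+1)p$ together with monotonicity of $h/(h\pm1)$ gives $\sqrt{2/3}<\sigma^*/\sigma^\dagger<\sqrt{2}$ uniformly for all $h\ge2$, which both closes that gap and yields a strictly tighter constant than the stated $[1/\sqrt2,\,3/\sqrt2]$. Your observations that $h\ge 0$ (so the three branches are exhaustive) and that $p+\ell\ge 1$ in the $h=1$ branch are also points the paper glosses over; they are easy but worth having on the record.
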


\begin{proof}
When $\high((\mu^*)^2) - \high(\mu^*_2) =0$ or $\high((\mu*)^2) - \high(\mu^*_2) =1$ we have $\sigma^* = \sigma^\dagger$.
When $\high((\mu^*)^2) - \high(\mu^*_2) > 2$ we get $\sigma^\dagger > 2p$
$$\frac{1}{2} \le \frac{(\sigma^\dagger)^2 - p}{(\sigma^\dagger)^2} \le \frac{(\sigma^*)^2}{(\sigma^\dagger)^2} \le \frac{(\sigma^\dagger)^2 + p}{(\sigma^\dagger)^2} \le \frac{3}{2}$$

Therefore,
$$\frac{1}{\sqrt{2}}\sigma^\dagger \le \sigma^* \le \frac{3}{\sqrt{2}}\sigma^\dagger.$$
\end{proof}

We thus proved that $\sigma^*$ as computed by Protocol~\ref{alg_geo_search} is a good approximation for $\sigma^\dagger$. We now prove that
$\sigma^\dagger$ is a good approximation for $\sigma$ with good probability.

\begin{lemma}
Let $\mu,\mu_2, \sigma$ and $\sigma^\dagger$ be as above, then for any $\delta>0$
$$Pr[\abs{\sigma^\dagger - \sigma} < \delta\sigma] < 2e^{-\delta O(\sigma^2 +\frac{\mu^4}{\sigma^2+\mu^2} + \mu)}$$.
\end{lemma}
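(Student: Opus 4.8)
The plan is to view $\sigma^\dagger$ as the value of the map $(a,b)\mapsto\sqrt{b-a^2}$ evaluated at the pair of moment estimates $(\mu^*,\mu_2^*)$ — so that $(\sigma^\dagger)^2 = \mu_2^* - (\mu^*)^2$ is the protocol's estimate of the variance $\sigma^2 = \mu_2 - \mu^2$ — and to bound $Pr[\,|\sigma^\dagger-\sigma|>\delta\sigma\,]$ by (a) controlling the two coordinate errors $|\mu^*-\mu|$ and $|\mu_2^*-\mu_2|$, and (b) pushing them through this map. Both coordinate errors are exactly of the type handled by Theorem~\ref{lemma_chernoff}: $\mu^*$ is the output of $\ProbAvg$ with $f(x)=x$ and $m=n$, for which $\chi=\frac1n\sum x_i=\mu$, so $Pr[\,|\mu^*-\mu|>\delta_1\mu\,]<2\exp(-\mu\delta_1^2/3)$; and $\mu_2^*$ is assembled from the $\low$/$\high$ digit-couples outputs of $\ProbAvg$ with $f(x)=x^2$, for which the relevant $\chi$'s are $\mu_2\bmod p$ and $\lfloor\mu_2/p\rfloor$ with $\mu_2=\sigma^2+\mu^2$, giving (after routine accounting for the two digits) a comparable bound $Pr[\,|\mu_2^*-\mu_2|>\delta_2\mu_2\,]<2\exp(-\Omega(\mu_2\delta_2^2))$.

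Next I would do the algebraic propagation. On the event that $|\mu^*-\mu|\le\delta_1\mu$ and $|\mu_2^*-\mu_2|\le\delta_2\mu_2$ with $\delta_1,\delta_2\le 1$, the triangle inequality together with $(\mu^*)^2-\mu^2=(\mu^*-\mu)(\mu^*+\mu)$ gives
\[
\bigl|(\sigma^\dagger)^2-\sigma^2\bigr|\le |\mu_2^*-\mu_2| + |\mu^*-\mu|\,|\mu^*+\mu| \le \delta_2\mu_2 + 3\delta_1\mu^2 = O\!\bigl((\delta_1+\delta_2)(\sigma^2+\mu^2)\bigr),
\]
using $\mu_2=\sigma^2+\mu^2$. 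Once $\delta_1+\delta_2$ is small compared to $\sigma^2/(\sigma^2+\mu^2)$ this also forces $(\sigma^\dagger)^2>0$, so $\sigma^\dagger$ is well defined and $\sigma^\dagger+\sigma\ge\sigma$; then converting the squared-error bound back to an error bound via $|\sigma^\dagger-\sigma| = |(\sigma^\dagger)^2-\sigma^2|/(\sigma^\dagger+\sigma)\le |(\sigma^\dagger)^2-\sigma^2|/\sigma$ yields
\[
|\sigma^\dagger-\sigma| = O\!\left(\frac{(\delta_1+\delta_2)(\sigma^2+\mu^2)}{\sigma}\right).
\]

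To make the right-hand side at most $\delta\sigma$ I would take $\delta_1=\delta_2=\Theta\bigl(\delta\,\sigma^2/(\sigma^2+\mu^2)\bigr)$ — the quantitative form of the statement that squaring $\mu^*$ amplifies its additive error by $\approx 2\mu$, so its relative error must be shrunk by roughly $\sigma^2/(\sigma^2+\mu^2)$. Plugging this choice into the two tail bounds and union-bounding, the $\mu\delta_1^2$ exponent contributes a term scaling like $\delta^2\mu\sigma^4/(\sigma^2+\mu^2)^2$ and the $\mu_2\delta_2^2$ exponent a term scaling like $\delta^2\sigma^4/(\sigma^2+\mu^2)$; bounding $e^{-a}+e^{-b}\le 2e^{-\min(a,b)}$ and absorbing the leftover $\mu$, $\sigma^2$, $\mu^2$ factors into the $O(\cdot)$ collapses these into $2\exp\bigl(-\delta\,O(\sigma^2+\tfrac{\mu^4}{\sigma^2+\mu^2}+\mu)\bigr)$, as claimed.

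The main obstacle I expect is exactly this last bookkeeping step: the Chernoff exponents are naturally quadratic in the relative errors, and because the error on $\mu$ must be contracted by the $\mu$-dependent factor above, the interplay between that contraction and the exponent $\mu\delta_1^2$ is what produces the slightly awkward $\mu^4/(\sigma^2+\mu^2)$ term — getting the algebra to land on the stated expression (rather than a messier but equivalent one, and in particular reconciling the power of $\delta$) is where care is needed. A secondary technicality is making the two-digit ($\low$/$\high$) reconstruction of $\mu_2^*$ rigorous so that its deviation bound really has the form used above, and confirming $(\sigma^\dagger)^2\ge 0$ and $\sigma^\dagger+\sigma\ge\sigma$ on the good event so that the square-root-to-value conversion is legitimate.
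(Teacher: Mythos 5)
Your overall strategy matches the paper's: apply the Chernoff bound of Theorem~\ref{lemma_chernoff} to the two moment estimates $\mu^*$ and $\mu_2^*$ with rescaled deviation parameters, union-bound, propagate through $\sigma^\dagger=\sqrt{\abs{\mu_2^*-(\mu^*)^2}}$, and convert the squared deviation into a deviation of the square root. The divergence is in how the error budget is split, and it is exactly where your proof fails to reach the stated bound. You bound $\abs{(\sigma^\dagger)^2-\sigma^2}$ by the \emph{worst-case} triangle inequality $\delta_2\mu_2+3\delta_1\mu^2$, which forces $\delta_1,\delta_2=\Theta\big(\delta\sigma^2/(\sigma^2+\mu^2)\big)$ and hence Chernoff exponents $\mu_2\delta_2^2=\delta^2\sigma^4/(\sigma^2+\mu^2)$ and $\mu\delta_1^2=\delta^2\mu\sigma^4/(\sigma^2+\mu^2)^2$. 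When $\mu\gg\sigma$ these exponents \emph{decrease} in $\mu$, while the claimed exponent $\delta\big(\sigma^2+\mu^4/(\sigma^2+\mu^2)+\mu\big)$ grows like $\delta\mu^2$. No bookkeeping can bridge this: the step ``absorbing the leftover $\mu,\sigma^2,\mu^2$ factors into the $O(\cdot)$'' is not legitimate inside an exponent, since $2e^{-a}<2e^{-\delta O(X)}$ requires $a\ge c\,\delta X$, and $\delta^2\sigma^4/\mu^2\ge c\,\delta\mu^2$ is false for small $\delta$ and $\mu\gg\sigma$. So the concern you flagged at the end is not a bookkeeping nuisance but a genuine obstruction for your route.

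The paper avoids the symmetric split. It keeps the full relative error $\delta$ on $\mu^*$ (exponent $\mu\delta^2$) and chooses a compensating threshold $\delta'$ on $\mu_2^*$ defined by the identity $(1+\delta')\mu_2-(1-\delta)^2\mu^2=(1+2\delta)\sigma^2$, using the one-sided containment $\br{(\sigma^\dagger)^2>(1+2\delta)\sigma^2}\subseteq\br{\mu_2^*>(1+\delta')\mu_2}\cup\br{\mu^*<(1-\delta)\mu}$ (and symmetrically for the lower tail). This makes $\delta'=\Theta(\delta)$ rather than $\Theta(\delta\sigma^2/\mu_2)$, and $\mu_2(\delta')^2$ is what produces the $\mu^4/(\sigma^2+\mu^2)$ term. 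You should be aware, however, that this device is itself delicate: the resulting $\delta'$ is proportional to $\sigma^2-c\mu^2$ and is negative whenever $\mu$ is moderately larger than $\sigma$, in which case $(1+\delta')\mu_2$ lies \emph{below} the mean of $\mu_2^*$ and the upper-tail Chernoff bound the paper invokes does not apply. So while your proposal does not prove the lemma as stated, the weaker bound it yields is arguably closer to what the argument actually supports; to follow the paper you would need to adopt its asymmetric threshold decomposition and then confront the sign of $\delta'$.
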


\begin{proof}
Set $\delta' = 2\delta + (4\delta^2 - 6\delta)\frac{\mu^2}{\mu_2}$, then
$Pr[\mu_2 > (1+\delta')\mu_2]  < \exp(-\frac{\mu_2\delta'^2}{3})$.
Since,
$(1+\delta')\mu_2 - (1-\delta)^2\mu^2 = (1+2\delta)\mu_2 - (1+2\delta)\mu^2$, we have:
%$$Pr[\sigma^\dagger^2 > (1+2\delta)(\sigma)^2] < \exp(-\frac{\mu_2\delta'^2}{3})  \exp(-\frac{2\mu\delta^2}) =
$$Pr\left((\sigma^\dagger)^2 > (1+2\delta)\sigma^2\right) <\exp\left(-\frac{\mu_2\delta'^2+6\mu\delta^2}{3}\right)  $$ $$= \exp\left(-\delta O\left(\sigma^2 +\frac{\mu^4}{\mu_2} + \mu\right)\right).$$

Similarly, set $\delta'' = \delta - (\delta^2 + 3\delta)\frac{\mu^2}{\mu_2}$, then
$Pr[\mu_2 < (1-\delta'')\mu_2] < \exp(-\frac{\mu_2\delta''^2}{2})$.
Since,
$(1-\delta'')\mu_2 - (1+\delta)^2\mu^2 =
(1-\delta)\mu_2 - (1-\delta)\mu^2 $, we have:
%$$Pr[\sigma^\dagger^2 < (1-\delta)(\sigma)^2] < \exp(-\frac{\mu_2\delta''^2}{3})  \exp(-\frac{\mu\delta^2}{3}) =
$$Pr\left[(\sigma^\dagger)^2 < (1-\delta)\sigma^2\right] < 
\exp\left(-\frac{2\mu_2\delta''^2 + 3\mu\delta^2}{6}\right) $$ $$= \exp\left(-\delta O\left(\sigma^2 +\frac{\mu^4}{\mu_2} + \mu\right)\right).
$$

Since $\sqrt{1+2\delta} < 1+\delta$ and $\sqrt{1-\delta} < 1-\delta$ we get that
$$Pr[\abs{\sigma^\dagger - \sigma} < \delta\sigma] < \exp(-\frac{\mu_2\delta'^2+6\mu\delta^2}{3}) + \exp(-\frac{2\mu_2\delta''^2 + 3\mu\delta^2}{6}).$$

Putting it together with Lemma~\ref{lemma_sigma_dagger} we get,
$$Pr[\abs{\sigma^* - \sigma} < \delta\sigma] < 2e^{-\delta O(\sigma^2 +\frac{\mu^4}{\sigma^2+\mu^2} + \mu)}$$.

\end{proof}

\paragraph{Proving $T^*\approx T$.}

Denote by
$T = \mu + \Phi^{-1}(\frac{k}{n})\sigma$, where $\Phi$ is the CDF function of the standard Gaussian distribution, and $\Phi^{-1}$ is its inverse.
By definition, $k = \abs{ \sett{s_i}{x_i < T}}$.
Also denote by
$T^* = \mu^* + \Phi^{-1}(\frac{k}{n})\sigma^*$, as computed by Protocol~\ref{alg_geo_search}. We next show that $T^* \approx T$ with high probability.

\begin{lemma}
$$Pr[\abs{T^* - T} > \delta T] < 2\exp(-\delta O(\sigma^2 + \mu^2)) + 2\exp(\frac{-\mu\delta^2}{3}).$$
\end{lemma}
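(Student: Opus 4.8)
The plan is to reduce the statement to the two approximation guarantees already available: $\mu^* \approx \mu$ from Theorem~\ref{lemma_chernoff} applied with $f(x)=x$ and $m=n$ (so that $\chi = \mu$), and $\sigma^* \approx \sigma$ from the preceding lemma, which (combining Lemma~\ref{lemma_sigma_dagger} with the bound on $\abs{\sigma^\dagger - \sigma}$) gives $Pr[\abs{\sigma^* - \sigma} > \delta\sigma] < 2\exp\!\bigl(-\delta\, O(\sigma^2 + \mu^4/(\sigma^2+\mu^2) + \mu)\bigr)$. First I would use the linearity identity
$$T^* - T = (\mu^* - \mu) + \Phi^{-1}\!\left(\frac{k}{n}\right)(\sigma^* - \sigma),$$
so that by the triangle inequality $\abs{T^* - T} \le \abs{\mu^* - \mu} + \bigl|\Phi^{-1}(k/n)\bigr|\cdot\abs{\sigma^* - \sigma}$.

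Next I would fix an auxiliary parameter $\delta'$, proportional to $\delta$ with a constant depending only on the fixed quantity $\bigl|\Phi^{-1}(k/n)\bigr|$, chosen so that on the event $\{\abs{\mu^* - \mu} \le \delta'\mu\} \cap \{\abs{\sigma^* - \sigma} \le \delta'\sigma\}$ one has $\abs{T^* - T} \le \delta'\bigl(\mu + |\Phi^{-1}(k/n)|\,\sigma\bigr) \le \delta T$. Taking complements and a union bound then yields
$$Pr[\abs{T^* - T} > \delta T] \le Pr[\abs{\mu^* - \mu} > \delta'\mu] + Pr[\abs{\sigma^* - \sigma} > \delta'\sigma].$$
The first term is at most $2\exp(-\mu(\delta')^2/3) = 2\exp(-\mu\delta^2\, O(1)/3)$ by Theorem~\ref{lemma_chernoff}, and the second is at most $2\exp\!\bigl(-\delta'\,O(\sigma^2 + \mu^4/(\sigma^2+\mu^2) + \mu)\bigr)$ by the $\sigma^*$ bound. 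To finish I would simplify the latter exponent: when $\sigma \le \mu$ we have $\mu^4/(\sigma^2+\mu^2) = \Theta(\mu^2)$, and $\mu \le \mu^2$ whenever $\mu \ge 1$, so $\sigma^2 + \mu^4/(\sigma^2+\mu^2) + \mu = \Theta(\sigma^2 + \mu^2)$; absorbing the factor $\delta' = \Theta(\delta)$ into the $O(1)$ gives $2\exp(-\delta\, O(\sigma^2 + \mu^2))$, and adding the two terms produces exactly the claimed bound.

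The main obstacle I anticipate is the step that converts $\delta'(\mu + |\Phi^{-1}(k/n)|\sigma) \le \delta T$ into a legitimate choice $\delta' = \Theta(\delta)$: this needs $T = \mu + \Phi^{-1}(k/n)\sigma$ to be $\Theta(\mu + \sigma)$, which is immediate when $\Phi^{-1}(k/n)\ge 0$, but in the regime $k/n < 1/2$ (as in the experiments) $\Phi^{-1}(k/n)$ is negative, so one must additionally invoke that $T > 0$ — the distances are positive integers — and that $|\Phi^{-1}(k/n)|$ stays bounded away from $\mu/\sigma$, so $T$ does not degenerate. A second, purely bookkeeping, issue is reconciling the $O(\cdot)$ constants: the factor lost in passing from $\delta'$ back to $\delta$ is swallowed by the $O(1)$ inside the exponents, which is why the final bound is written with the bare $\mu\delta^2/3$ and $O(\sigma^2+\mu^2)$.
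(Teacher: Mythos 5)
Your overall strategy---reduce the claim to the concentration of $\mu^*$ and $\sigma^*$ and apply a union bound---is exactly the paper's strategy, and your reduction of the first term to Theorem~\ref{lemma_chernoff} with $f(x)=x$, $m=n$ and of the second to the preceding $\sigma^*$ lemma matches the paper. The difference is in how the coefficient $\Phi^{-1}(k/n)$ is handled, and there your proof has a genuine gap that you yourself flagged but did not close. Your triangle-inequality bound gives $\abs{T^*-T}\le \delta'\bigl(\mu+\abs{\Phi^{-1}(k/n)}\sigma\bigr)$, and to convert this into $\delta T$ with $\delta'=\Theta(\delta)$ you need $\mu+\abs{\Phi^{-1}(k/n)}\sigma=O(T)$. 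When $\Phi^{-1}(k/n)<0$ we have $T=\mu-\abs{\Phi^{-1}(k/n)}\sigma$, which can be arbitrarily small compared to $\mu+\abs{\Phi^{-1}(k/n)}\sigma$; and $k\ll n$ (hence $\Phi^{-1}(k/n)<0$) is precisely the operating regime of the protocol (e.g.\ $k=13$, $n=568$ in the experiments). The two facts you propose to ``additionally invoke''---that $T>0$ and that $\abs{\Phi^{-1}(k/n)}$ stays bounded away from $\mu/\sigma$---are not hypotheses of the lemma, are not established anywhere in the paper, and even if granted do not by themselves yield a constant ratio $T/(\mu+\abs{\Phi^{-1}(k/n)}\sigma)$; so as written the constant hidden in $\delta'=\Theta(\delta)$ is not a constant, and the claimed bound does not follow in the main case.

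The paper avoids this entirely with a reflection argument: it assumes WLOG that $\Phi^{-1}(k/n)>0$, reducing the other case to this one by replacing $x_1,\ldots,x_n$ with $p-x_1,\ldots,p-x_n$ and $k$ with $n-k$, using $\Phi^{-1}\bigl(\frac{n-k}{n}\bigr)=-\Phi^{-1}\bigl(\frac{k}{n}\bigr)$. Once all of $\mu$, $\sigma$, $\Phi^{-1}(k/n)$ are non-negative, a multiplicative (rather than additive) argument applies: if $\mu^*\in[(1-\delta)\mu,(1+\delta)\mu]$ and $\sigma^*\in[(1-\delta)\sigma,(1+\delta)\sigma]$ then $T^*/T\in[1-\delta,1+\delta]$ directly, with the \emph{same} $\delta$ and no auxiliary $\delta'$, so none of your bookkeeping about absorbing constants into the exponents is needed. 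If you want to salvage your additive route, you should import the paper's reflection step first; after that your union bound goes through verbatim with $\delta'=\delta$.
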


\begin{proof}
Recall that $x_1, \ldots, x_n \in \br{0, \ldots, p}$ are the distances $x_i = dist(s_i, q)$ with $\mu$ and $\sigma$ as above.
To simplify our proof we assume without loss of generality that $\Phi^{-1}(k/n) > 0$. This happens when $k>\mu$.
In cases where $k < \mu$ we replace $x_1, \ldots, x_n$ with $(p - x_1), \ldots, (p-x_n)$ and we replace $k$ with $n-k$.
By the properties of $\Phi^{-1}$ we have $\Phi^{-1}(\frac{n-k}{n}) = - \Phi^{-1}(\frac{k}{n})$.

We therefore continue assuming $\Phi(k/n)>0$.

By definition
$$\frac{T^*}{T} = \frac{\mu^*+\Phi^{-1}(k/n)\sigma^*}{\mu+\Phi^{-1}(k/n)\sigma}.$$
Since $\mu,\sigma,\Phi^{-1}(k/n) \ge 0$

$$
1-\delta = \frac{(1-\delta)\mu +(1-\delta)\Phi^{-1}(\frac{k}{n})\sigma}{\mu+\Phi^{-1}(\frac{k}{n})\sigma}<
\frac{\mu^*+\Phi^{-1}(\frac{k}{n})\sigma^*}{\mu+\Phi^{-1}(\frac{k}{n})\sigma} <
\frac{(1+\delta)\mu +(1+\delta)\Phi^{-1}(\frac{k}{n})\sigma}{\mu+\Phi^{-1}(\frac{k}{n})\sigma} = 1+\delta$$

and therefore,

$$Pr[\abs{T^* - T} > \delta T] < Pr[\abs{\mu^* - \mu} > \delta \mu] + Pr[\abs{\sigma^* - \sigma} > \delta \sigma] =
2\exp(-\delta O(\sigma^2 \mu^2)) + 2\exp(\frac{-\mu\delta^2}{3}).
$$
\end{proof}

We are now ready to prove the correctness of our protocol.

\paragraph{Proving $\kappa\approx k$.}

\restateTheoremKNN*

\begin{proof}[of item (i)]
From the definition of the cumulative distribution function (CDF) we have,
$$\frac{\kappa - k}{n} = CDF(T^*) - CDF(T).$$
Since $\DD_{S,q}$ is a discrete distribution it follows that

$$CDF(T^*) - CDF(T) = \sum_{a=0}^{T^*} Pr[x = a] -  \sum_{a=0}^{T} Pr[x = a]$$
Since 
$Pr[\abs{T^* - T} > \delta T] < 2\exp\big(-\delta O(\sigma^2 + \mu^2)\big) + 2\exp(\frac{-\mu\delta^2}{3})$ it follows with the same probability

$$\sum_{a=T(1-\delta)}^{T} Pr[x = a] <
\abs { \sum_{a=0}^{T^*} Pr[x = a] -  \sum_{a=0}^{T} Pr[x = a] }
< \sum_{a=T}^{T(1+\delta)} Pr[x = a]$$

From the definition of the statistical distance $d = \SD\big(\DD_{S,q}, \Gaussian(\mu, \sigma)\big)$ it follows that
for $n \sim \Gaussian(\mu,\sigma)$:

$$ \sum_{a=T}^{T(1+\delta)} Pr[x = a] <
\sum_{a=T}^{T(1+\delta)} (Pr[n = a] + d)  =
\sum_{a=0}^{T(1+\delta)} Pr[n = a] - \sum_{a=0}^{T} Pr[n=a] + d\delta T$$
$$ =
\Phi^{-1}(\frac{T(1+\delta) - \mu}{\sigma}) -
\Phi^{-1}(\frac{T - \mu}{\sigma})
+ d\delta T
< \frac{\sqrt{2\pi} \delta T}{\sigma} + d \delta T$$

where the last inequation is true since $(\Phi^{-1})'(x) < \frac{1}{\sqrt{2\pi}}$ and therefore $\Phi^{-1}(a+b) < \Phi^{-1}(a) + \frac{b}{\sqrt{2\pi}}$.

Similarly,
$$ \sum_{a=T(1-\delta)}^{T} Pr[x = a] <
\sum_{a=T(1-\delta)}^{T} (Pr[n = a] + d)  =
\sum_{a=0}^{T} Pr[n = a] - \sum_{a=0}^{T(1-\delta)} Pr[n=a] + d\delta T$$
$$ =
\Phi^{-1}(\frac{T - \mu}{\sigma}) -
\Phi^{-1}(\frac{T(1-\delta) - \mu}{\sigma})
+ d\delta T
< \frac{\sqrt{2\pi} \delta T}{\sigma} + d \delta T$$

where the last inequation is true since $(\Phi^{-1})'(x) < \frac{1}{\sqrt{2\pi}}$ and therefore  $\Phi^{-1}(a+b) < \Phi^{-1}(a) + \frac{b}{\sqrt{2\pi}}$.

Putting it all together we get that for any $\delta' > 0$
$$
Pr[ \abs{\kappa - k} > \delta' T (d + \frac{2\sqrt\pi}{\sigma}) ]  < 2\exp(-\delta' O(\sigma^2 + \mu^2)) + 2\exp(\frac{-\mu(\delta')^2}{3})
$$

Substituting $\delta' = \frac{\delta k}{T(d+\frac{\sqrt{2\pi}}{\sigma})}$ we get
$$
Pr[ \abs{\kappa - k} > \delta k]  < 2\exp\bigg(-O\Big(\frac{\delta k}{\mu + \Phi^{-1}(k/n)\sigma}(\sigma^2 +\frac{\mu^4}{\sigma^2+\mu^2} + \mu)\Big)\bigg) + 2\exp\bigg(-O\Big(\frac{\mu\delta^2k^2\sigma}{\mu s + \sigma^2 s}\Big)\bigg)
$$
\end{proof}

We have therefore shown that $\kappa \approx k$ with high probability.

\end{document}